\begin{document}
\title{Temporal-Structure-Assisted Gradient Aggregation for Over-the-Air Federated Edge Learning}

\author{Dian~Fan, 
        Xiaojun~Yuan,~\IEEEmembership{Senior~Member,~IEEE,}
        and~Ying-Jun~Angela~Zhang,~\IEEEmembership{Fellow,~IEEE}
\thanks{D. Fan and X. Yuan are with the Center for intelligent Networking and Communications, the University of Electronic Science and Technology of China, Chengdu, China (e-mail: df@std.uestc.edu.cn; xjyuan@uestc.edu.cn). Y.-J. A. Zhang is with the Department of Information Engeneering, The Chinese University of Hong Kong, Shatin, New Territories, Hong Kong SAR (e-mail: yjzhang@ie.cuhk.edu.hk). The corresponding author is Xiaojun Yuan.}}

\maketitle

\begin{abstract}
In this paper, we investigate over-the-air model aggregation in a federated edge learning (FEEL) system. We introduce a Markovian probability model to characterize the intrinsic temporal structure of the model aggregation series. With this temporal probability model, we formulate the model aggregation problem as to infer the desired aggregated update given all the past observations from a Bayesian perspective. We develop a message passing based algorithm, termed temporal-structure-assisted gradient aggregation (TSA-GA), to fulfil this estimation task with low complexity and near-optimal performance. We further establish the state evolution (SE) analysis to characterize the behaviour of the proposed TSA-GA algorithm, and derive an explicit bound of the expected loss reduction of the FEEL system under certain standard regularity conditions. In addition, we develop an expectation maximization (EM) strategy to learn the unknown parameters in the Markovian model. We show that the proposed TSA-GA algorithm significantly outperforms the state-of-the-art, and is able to achieve comparable learning performance as the error-free benchmark in terms of both convergence rate and final test accuracy.
\end{abstract}

\begin{IEEEkeywords}
Federated edge learning (FEEL), federated learning (FL), over-the-air model aggregation, temporal structure assisted gradient aggregation (TSA-GA), turbo message passing
\end{IEEEkeywords}

\IEEEpeerreviewmaketitle

\section{Introduction}
As the fast development of wireless big data, massive amounts of mobile data generated at edge devices with growing computation power have boosted the desire to train artificial intelligence models at the wireless edge. Federated learning (FL) \cite{mcmahan2017communication} is one of the most promising enabling technologies for distributed model training and inference, where a global model is shared and trained collaboratively among local devices using local datasets of their own. The local updates are aggregated at a remote parameter server (PS) which tracks and broadcasts the global model update with the participating devices throughout the training process. Unlike centralized learning which requires direct uploading of raw data, the FL paradigm only involves the uploading of model updates by each individual device, thereby relieving the communication cost significantly and avoiding the exposure of local data.\\
\indent In spite of the appealing aspects of FL compared with centralized learning, it has been reported that the demanding uplink communication overhead of high-dimensional model updating over an unreliable wireless medium turns out to be a critical bottleneck for the implementation of FL \cite{mcmahan2017communication,konevcny2016federated}. Recently, much research effort has been devoted to incorporate the physical layer characteristics into the FL system via the communication-learning joint design, referred to as federated edge learning (FEEL) \cite{nishio2019client,ren2020scheduling,jeon2020compressive,chen2020joint,zhu2019broadband, amiri2020machine, amiri2020federated, yang2020federated,zhu2020one, elgabli2020harnessing,sery2020analog,liu2020reconfigurable,amiri2020blind}. In this thread, over-the-air computation based FEEL has emerged by leveraging the waveform-superposition property of the wireless medium for simultaneous uploading of local updates, leading to a high spectral efficiency compared with conventional orthogonal multiple access protocols \cite{zhu2019broadband, amiri2020machine, amiri2020federated, amiri2020blind, yang2020federated,zhu2020one, elgabli2020harnessing,sery2020analog,liu2020reconfigurable}. Pioneering works in \cite{zhu2019broadband, amiri2020machine, amiri2020federated, yang2020federated} validate the superiority of this over-the-air transmission scheme over the conventional orthogonal one by significant acceleration of convergence. Various over-the-air FEEL approaches \cite{amiri2020blind, elgabli2020harnessing, sery2020analog, zhu2020one,liu2020reconfigurable} have been developed to overcome the hostile effects of wireless links between edge devices and the PS. \\
\indent The intrinsic sparsity of local updates can be leveraged to relieve the bandwidth limitation and improve the learning efficiency of FEEL. This is motivated by the observation that the number of significant elements in a model update is extremely small. Specifically,  \cite{amiri2020machine} proposed to sparsify and compress local updates before transmission. The desired aggregated update at PS is then reconstructed from the noisy received signal via compressed sensing. In \cite{amiri2020federated}, the scheme of \cite{amiri2020machine} is extended to a fading channel, where a truncated channel inversion strategy is employed to confront fading. The existing works \cite{amiri2020machine} and \cite{amiri2020federated}, however, have a common limitation, i.e., they use the sparsity structure of model updates within a single communication round but ignore the more obscure structure of the updates between rounds. It is known in the artificial intelligence community that the significant model parameters are highly correlated throughout the training process; see, e.g., the work on model pruning \cite{han2015learning}. This inspires us to explore the intrinsic temporal correlation of model updates as a new dimension to enhance the FEEL performance.\\
\indent In this paper, we make an initial attempt to investigate the temporal structure of gradients as local updates in the FEEL system with over-the-air update aggregation. We  consider a FEEL system with over-the-air computation over a multiple access channel (MAC) where local gradients are sparsified and compressed before transmission to meet the bandwidth limitation. We introduce a probability model to characterize the intrinsic temporal structure of the gradient aggregation series, i.e., the strong temporal correlation is characterized by two independent Markov chains for each element, one for support and the other for amplitude. With this probabilistic model as prior information, the goal of the PS is formulated as to infer the minimum mean-squared error (MMSE) solution of the desired aggregated update given all the past observations in an online fashion. We develop a message passing based algorithm, termed temporal structure assisted gradient aggregation (TSA-GA), to approximately fulfil this estimation task with relatively low complexity. Compared with the state-of-the-art work in \cite{amiri2020machine} which only exploits the intra-round sparsity of gradient, our reconstruction approach benefits from additional prior knowledge of the inter-round gradient correlation. In addition, we emphasize that the TSA-GA algorithm is an extension of the turbo compressed sensing (Turbo-CS) algorithm \cite{ma2014turbo} for online sparse signal recovery. The Turbo-CS framework generally has the advantage of faster convergence and better recovery performance, as compared with the approximate message passing (AMP) based compressed sensing approach \cite{ma2015performance}.\\
\indent Furthermore, we establish the state evolution (SE) analysis to characterize the behaviour of the proposed TSA-GA algorithm as well as the convergence of the over-the-air FEEL system. Specifically, we extend the SE in \cite{ma2014turbo} to analyse our online TSA-GA algorithm by tracking two scalar state variables recursively in each communication round until a fixed point is achieved. We then prove the monotonicity of the SE fixed point sequences over communication rounds. Based on that, by imposing standard assumptions on the FEEL loss function \cite{friedlander2012hybrid}, we establish an explicit bound of the expected loss reduction of the FEEL system. \\
\indent Besides, since model parameters are typically unknown in practical implementation, we develop an expectation maximization (EM) based strategy for determining the unknown parameters in the prior Markovian models. Numerical results confirm that the proposed TSA-GA scheme for over-the-air FEEL with local gradient compression outperforms the state-of-the-art \cite{amiri2020machine} and achieves comparable learning performance with the error-free benchmark in terms of both convergence rate and final test accuracy.\\
\indent The remainder of this paper is organized as follows. In Section II, we describe the FEEL framework, the wireless uplink model and the compression strategy at the edge. In Section III, we introduce the temporal structure of the gradient aggregation and formulate the aggregation reconstruction at the PS as an online Bayesian inference problem. In Section IV, we describe the TSA-GA algorithm to solve this problem approximately with low complexity. In Section V, we present the SE and the convergence analysis for the proposed training scheme. In Section VI, we tackle the practical parameter decision problem based on EM. In Section VII, simulation results are given and the paper concludes in Section VIII.

\section{System Model}
\subsection{Federated Edge Learning System}
\begin{figure}[!t]
\centering
\includegraphics [scale=0.40] {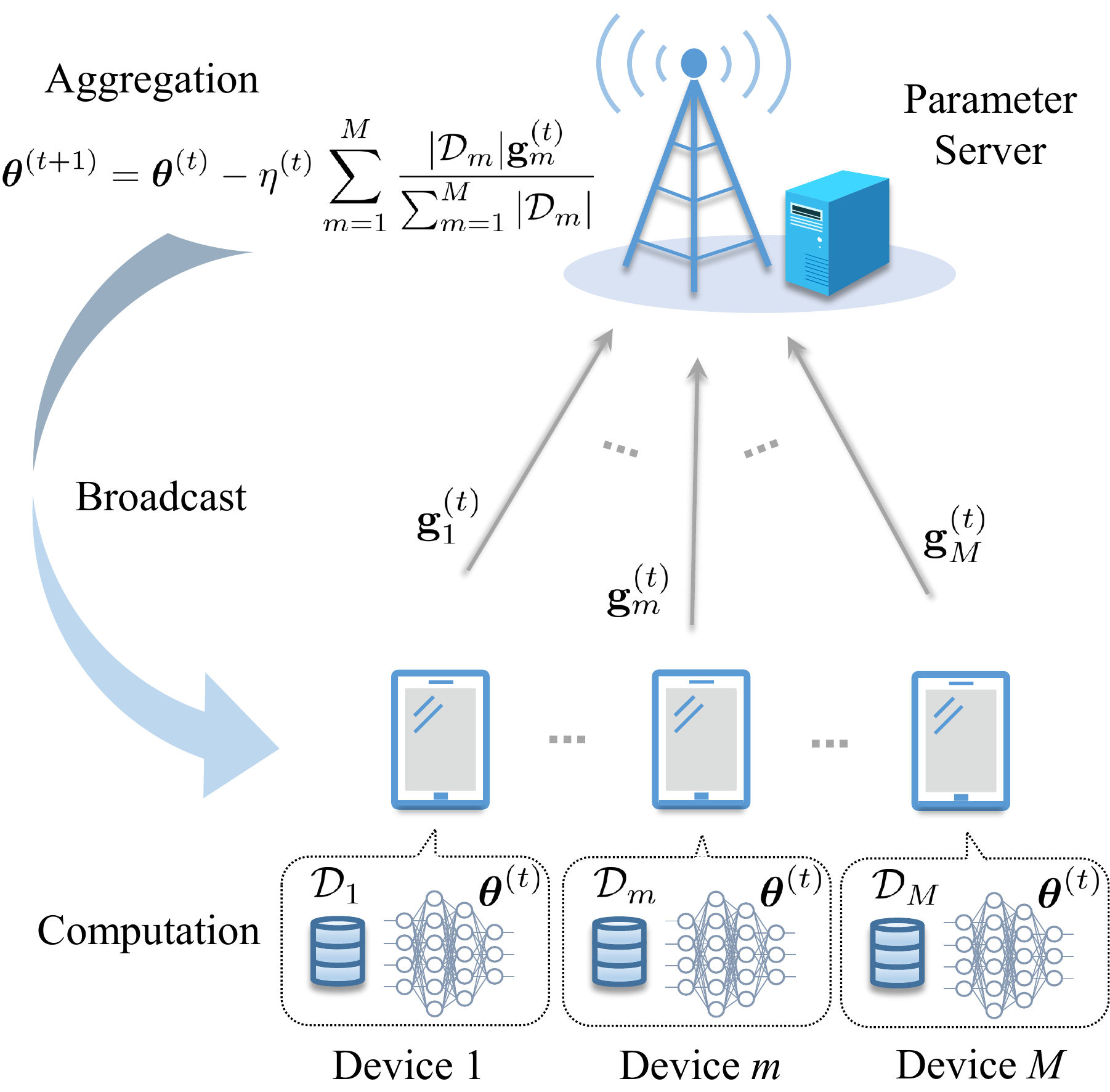}
\caption{An illustration of over-the-air FEEL architecture}
\end{figure}
We consider a FEEL system with a PS sharing the same global model $\boldsymbol{\theta} \in \mathbb{R}^N $ with $M$ edge devices, as illustrated in Fig. 1. Each device $m$ only has access to the local dataset $\mathcal{D}_m=\{(\mathbf{u}_{mk})\}_{k=1}^{K_m}$ of its own, where $K_m=|\mathcal{D}_m|$ is the number of local training samples, and $\mathbf{u}_{mk}$ denotes the $k$-th local training sample of device $m$. Given the sample-wise loss $\ell(\boldsymbol{\theta};\mathbf{u}_{mk})$ specified by the learning objective, the local loss function of device $m$ is written as
\begin{equation}
    \mathcal{L}_m(\boldsymbol{\theta})=\frac{1}{K_m}\sum_{k=1}^{K_m}\ell(\boldsymbol{\theta};\mathbf{u}_{mk}).
\end{equation}
The FEEL task is to minimize the global loss function
\begin{equation}
    \mathcal{L}(\boldsymbol{\theta})=\frac{1}{K}\sum_{m=1}^{M}K_m \mathcal{L}_m(\boldsymbol{\theta}),
\end{equation}
where $K\triangleq\sum_{m=1}^M K_m$ is the total number of data samples.
 This optimization is carried out via local gradient descent on the edge devices. At communication round $t$, starting from the latest model $\boldsymbol{\theta}_m^{(t)}[1]\triangleq \boldsymbol{\theta}^{(t)}$, each device $m$ performs $E$ times of local gradient descent via
\begin{equation}
    \boldsymbol{\theta}_m^{(t)}[i+1]=\boldsymbol{\theta}_m^{(t)}[i]-\eta^{(t)} \nabla \mathcal{L}_m(\boldsymbol{\theta}_m^{(t)}[i]) \label{DSGD},
\end{equation}
for $i=1,...,E$, where $\eta^{(t)}$ is the learning rate of round $t$. Thereafter, the local model update for round $t$ at device $m$ is given by
\begin{equation}
   \mathbf{g}_m^{(t)} = \boldsymbol{\theta}_m^{(t)}[E+1]-\boldsymbol{\theta}_m^{(t)}[1]. \label{4}
\end{equation}
Each device sends $\mathbf{g}_m^{(t)}$ to the PS for the model aggregation according to
\begin{equation}
    \boldsymbol{\theta}^{(t+1)}=\boldsymbol{\theta}^{(t)}-\eta^{(t)} \frac{1}{K} \sum_{m=1}^M K_m \mathbf{g}_m^{(t)} \label{Aggregation}.
\end{equation}
This updated model is shared among all the devices by the PS for the next round of local gradient descent. The training paradigm of (3)-(5) iterates until convergence.

\subsection{Wireless Transmission Model}
The FEEL system assumes a Guassian MAC with $s$ sub-channels as
\begin{equation}
    \tilde{\mathbf{y}}^{(t)}=\sum_{m=1}^{M} \tilde{\mathbf{g}}_m^{(t)}+\tilde{\mathbf{n}}^{(t)} \label{MAC},
\end{equation}
where $\tilde{\mathbf{g}}_m^{(t)}$ is the channel input sent by device $m$ at round $t$, $\tilde{\mathbf{y}}^{(t)} \in \mathbb{R}^s$ is the received signal at the PS, and $\tilde{\mathbf{n}}^{(t)} \in \mathbb{R}^s$ is the additive white Gaussian noise (AWGN) whose elements are independently and identically distributed (IID) according to $\mathcal{N}(0, \sigma_e^2)$. In general, $\tilde{\mathbf{g}}_m^{(t)} \triangleq\Phi(\mathbf{g}_m^{(t)}) \in \mathbb{R}^s$ is a function of the local gradient modified by error-accumulation, sparsification and compression, as specified later in the next subsection.
During a total of $T$ communication rounds, the transmission of each device $m$ is subject to the time-averaged power constraint: 
\begin{equation}
    \frac{1}{T} \sum_{t=1}^{T} \Vert \tilde{\mathbf{g}}_m^{(t)} \Vert^2 \le \bar{P} \label{power constraint}.
\end{equation}
Furthermore, the downlink model broadcasting from the PS to the edge devices is assumed to be error-free by following the convention \cite{amiri2020machine,amiri2020federated}.

\subsection{Gradient Sparsification and Compression}
Directly transmitting high-dimensional local gradients is not desirable due to limited bandwidth and power resource. For communication-efficient FEEL implementation, the model gradient vectors need to be sparsified and compressed before transmission. Since these operations incur error inevitably, we take an error-accumulation strategy to compensate the damage to model update \cite{amiri2020machine}. At each round, the local gradient $\mathbf{g}_m^{(t)}$ is pre-processed by adding the sparsification error $\Delta_m^{(t)}$ accumulated in the previous rounds, i.e.,
\begin{equation}
    {\mathbf{g}_m^{ec^{(t)}}}=\mathbf{g}_m^{(t)}+\Delta_m^{(t)} \label{ec},
\end{equation}
where the initial error is $\Delta_m^{(0)}=0$.
This error-accumulated local gradient is then sparsified by setting all to zero but the $k$ elements with the largest absolute values, denoted as a mapping:
\begin{equation}
    {\mathbf{g}_m^{sp^{(t)}}}=\mathrm{sp}_k({\mathbf{g}_m^{ec^{(t)}}}). \label{spar}
\end{equation}
Accordingly, the local error is updated as
\begin{equation}
    \Delta_m^{(t+1)}={\mathbf{g}_m^{ec^{(t)}}}-{\mathbf{g}_m^{sp^{(t)}}}. \label{Delta}
\end{equation}
After sparsification, only the gradient components with the highest impact on model update are retained. Then, ${\mathbf{g}_m^{sp^{(t)}}}$ is compressed and transmitted over the wireless MAC. With the sparsity, compressed sensing is applied at the PS for model aggregation, as described below.\\
\indent At each round $t$, a pseudo-random linear compression $\mathbf{A}^{(t)} \in \mathbb{R}^{s \times N}$ is assigned and shared between the PS and all the edge devices, where each device computes 
\begin{equation}
    {\mathbf{g}_m^{cp^{(t)}}}=\mathbf{A}^{(t)}{\mathbf{g}_m^{sp^{(t)}}} \label{compression}.
\end{equation}
In contrast to the work in \cite{amiri2020machine} and \cite{amiri2020federated} where $\mathbf{A}^{(t)}$ is IID Gaussian, we adopt a partial discrete cosine transform (DCT) matrix\footnote{$\mathbf{A} \in \mathbb{R}^{s\times N}$ is said to be a partial DCT matrix iff the rows of $\mathbf{A}$ are selected from the $N$-by-$N$ DCT matrix.} $\mathbf{A}^{(t)}$ instead to reduce the complexity of transmitter-side compression and receiver-side reconstruction. 
 Considering limited wireless bandwidth, the compression in (11) is supposed to largely reduce the length of the gradient vector, i.e., $s \ll d$.\\
\indent The transmit signal of device $m$ is
\begin{equation}
    \tilde{\mathbf{g}}_m^{(t)}=\Phi\left(\mathbf{g}_m^{(t)}\right) \triangleq \sqrt{\alpha_m^{(t)}} \frac{MK_m}{K}{\mathbf{g}_m^{cp^{(t)}}} \label{power control},
\end{equation}
where the power control coefficient $\alpha_m^{(t)}$ is chosen to satisfy the constraint \eqref{power constraint}. 
This scaling value $\sqrt{\alpha_m^{(t)}} \frac{MK_m}{K}$ in \eqref{power control} is attached to the compressed local gradient $\tilde{\mathbf{g}}_m^{(t)}$ and sent to the PS. Compared with the high-dimensional $\tilde{\mathbf{g}}_m^{(t)}$, the overhead of transmitting a scalar is negligible, so we assume that $\sqrt{\alpha_m^{(t)}} \frac{MK_m}{K}$ is transmitted in a noise-free manner. 
\\ \indent From \eqref{compression} and \eqref{power control}, the scaled received signal over the MAC in \eqref{MAC} is given by
\begin{align}
\mathbf{y}^{(t)} =
\mathbf{A}^{(t)} {\mathbf{x}}^{(t)} +  \mathbf{n}^{(t)} ,
\label{model2}
\end{align}
where $\mathbf{y}^{(t)} \triangleq \frac{K \tilde{\mathbf{y}}^{(t)}}{M \sum_{m=1}^M \sqrt{\alpha_m^{(t)}} K_m}$, $\mathbf{x}^{(t)} \triangleq \frac{\sum_{m=1}^M \sqrt{\alpha_m^{(t)}} K_m {\mathbf{g}_m^{sp^{(t)}}}}{\sum_{m=1}^M \sqrt{\alpha_m^{(t)}} K_m}$ and $\mathbf{n}^{(t)} \triangleq \frac{K \tilde{\mathbf{n}}^{(t)}}{M \sum_{m=1}^M \sqrt{\alpha_m^{(t)}} K_m} \sim \mathcal{N}\left(0, \sigma^2 \right)$ with $\sigma \triangleq \frac{K \sigma_e} {M \sum_{m=1}^M \sqrt{\alpha_m^{(t)}K_m}} $. We assume that $\{\alpha_m^{(t)}\}_{m=1}^{M}$ are appropriately adjusted to satisfy $\alpha_1^{(t)}=...=\alpha_M^{(t)}$ for any $t$ and the power constraint \eqref{power constraint}. Then, 
\begin{equation}
    \mathbf{x}^{(t)} = \frac{\sum_{m=1}^M  K_m {\mathbf{g}_m^{sp^{(t)}}}}{\sum_{m=1}^M  K_m} \label{14}
\end{equation}
is the (sparsified) gradient aggregation required in \eqref{Aggregation}.
Let $\hat{\mathbf{x}}^{(t)}$ be an estimation of ${\mathbf{x}^{(t)}}$ reconstructed by the PS. Then, the global model is updated by
\begin{equation}
    \mathbf{w}^{(t+1)} = \mathbf{w}^{(t)} -\eta^{(t)} \mathbf{\hat{x}}^{(t)}. \label{31}
\end{equation}
\indent For the above reconstruction problem, the compressed sensing approach based on the well-known AMP algorithm refered to as A-DSGD is proposed in \cite{amiri2020machine}, where the PS recovers $\mathbf{x}^{(t)}$ only exploiting the sparsity of \eqref{14}. However, from a broader Bayesian perspective, this compressed-sensing based approach can be extended to take into consideration a finer structure of $\{\mathbf{x}^{(t)}\}_{t=1}^{T}$ over time. The inherent temporal correlation of gradient aggregation in both support and amplitude is validated and characterized formally in the next section. Thus, the PS, aware of the temporal structure of $\{\mathbf{x}^{(t)}\}_{t=1}^{T}$, estimates $\mathbf{x}^{(t)}$ based not only on the current observation but also on the other historical ones, i.e., the goal is to infer $\mathbf{x}^{(t)}$ in an online manner given all the past observations $\{{\mathbf{y}}^{(j)}\}_{j=1}^{(t)}$, which is referred to as TSA-GA. Compared with A-DSGD, the proposed TSA-GA scheme is able to achieve substantial enhancement of the recovery accuracy and acceleration of the over-the-air learning, as detailed in what follows.
\section{Probability Model of Gradient Aggregation}
In this section, we characterize the temporal structure of the gradient aggregation series.
\subsection{Temporal Structure of Gradient Aggregation}
\begin{figure}[!t]
\centering
\includegraphics [scale=0.45] {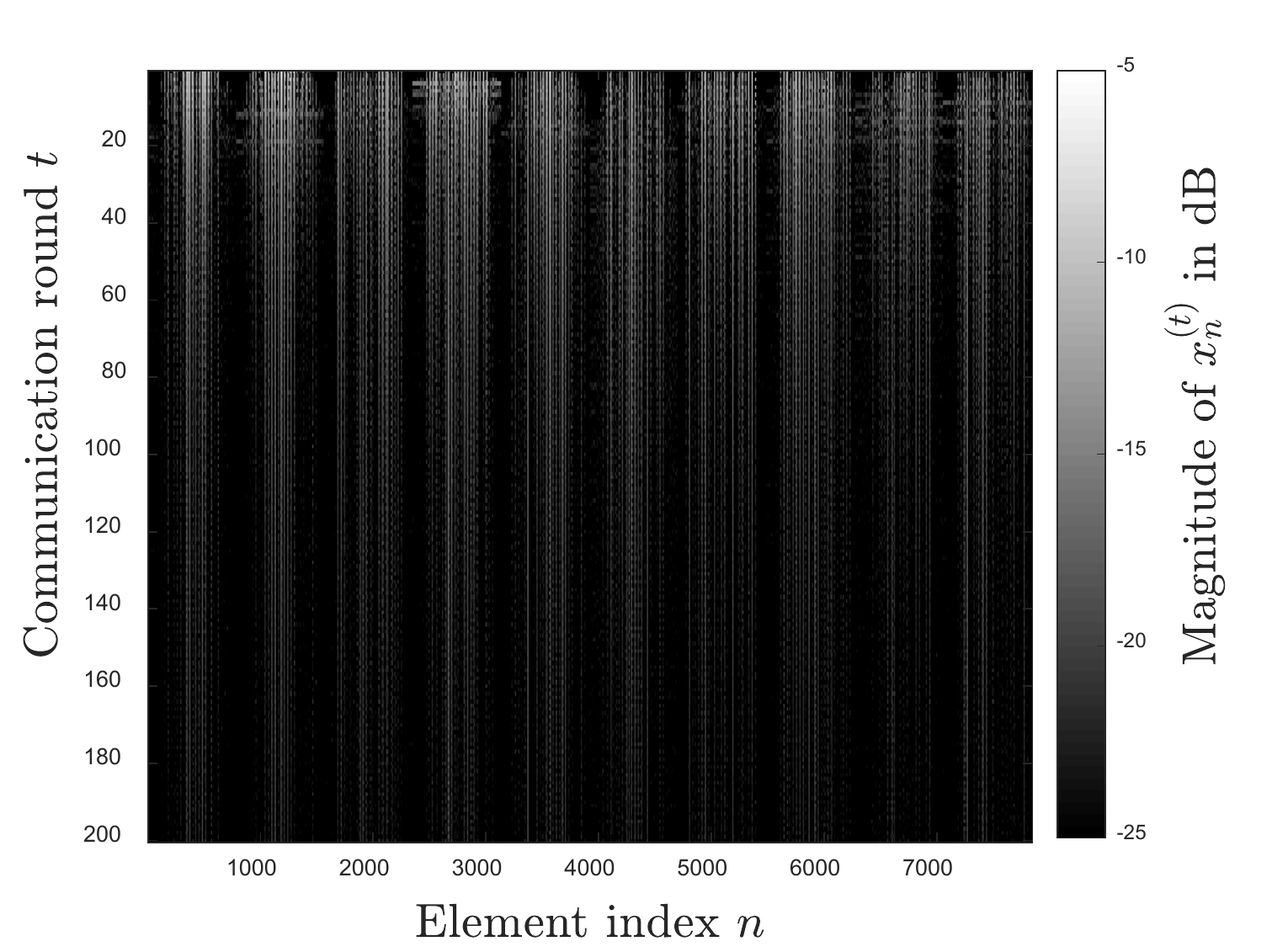}
\caption{Magnitude (in dB) of $x_n^{(t)}$ for the federated classification task on the MNSIT dataset where a single layer neural network is trained among $M=25$ devices using local gradient descent. $K_m=1000$, $\eta^{(t)}=0.01$, $E=1$, $k$ = 0.2.}
\end{figure}
To investigate the structure of $\{\mathbf{x}^{(t)}\}_{t=1}^{T}$, in Fig 2, we plot the magnitude of the elements of $\mathbf{x}^{(t)}$ during a FEEL training process for the digit classification task on the MNSIT dataset, where a global neural network with a single fully-connected layer is trained between $M=25$ devices and a PS. From Fig. 2, we have the following basic observation.\vspace{2ex}\\
\underline{\emph{Observation 1}} At each round $t$, many elements of $\mathbf{x}^{(t)}$ are close to zero, leading to sparse $\mathbf{x}^{(t)}$.\vspace{2ex}\\
The sparsity of $\mathbf{x}^{(t)}$ has already been investigated and exploited by many previous studies, such as for gradient compression and reconstruction improvement in \cite{jeon2020compressive,amiri2020machine}. To incorporate this sparsity of $\mathbf{x}^{(t)}$, one can adopt an IID Bernoulli-Gaussian prior on the elements of $\mathbf{x}^{(t)}$, i.e.,
\begin{equation}
    p\left(x_n^{(t)}\right)=\left(1-\lambda\right)\delta\left(x_n^{(t)}\right)+\lambda\mathcal{N}\left(x_n^{(t)};0,\gamma\right), \label{overall}
\end{equation}
for $t=1,2,...,T$, where $\lambda$ and $\gamma$ are the prior probability and the variance of the non-zero elements in $\mathbf{x}^{(t)}$, respectively. 
 Intuitively, the sparsity of $\mathbf{x}^{(t)}$ implies a large number of weak links between the neutrons at round $t$, with the sporadic large elements $x_n^{(t)}$ indicating the strong links that have significant impacts on the learning model. However, besides the sparsity of $\mathbf{x}^{(t)}$ at each round $t$, the strength of each link also exhibits strong correlation in time during the training process, as observed in Fig. 2. This phenomenon has also been observed in the field of model compression, e.g., weight pruning \cite{han2015learning}. To characterize a more complicated structure of $\{\mathbf{x}^{(t)}\}_{t=1}^{T}$, we decouple the $n$-th element of $\mathbf{x}^{(t)}$, $x_n^{(t)}$, by its support and amplitude\footnote{The amplitude contains the sign of $ x_n^{(t)}$.} as
\begin{equation}
    x_n^{(t)}=s_n^{(t)}r_n^{(t)} \label{decomp},
\end{equation}
where $s_n^{(t)} \in \{0,1\}$ indicates $x_n^{(t)}=0$ if $s_n^{(t)}=0$ and $x_n^{(t)} \neq0$ otherwise. From Fig. 2 , we have more observations on $\{\mathbf{x}^{(t)}\}_{t=1}^{T}$ as follows. \vspace{2ex}\\
\underline{\emph{Observation 2}}. The support $\mathbf{s}^{(t)}\triangleq [s_1^{(t)}, ...,s_N^{(t)}]$ varies slowly over  $t$.\\
\underline{\emph{Observation 3}}. The amplitude $\mathbf{r}^{(t)}\triangleq [r_1^{(t)}, ...,r_N^{(t)}]$ is highly correlated over $t$. \vspace{2ex}\\
\indent We remark the underlying insight of these observations. As stated before, the indices of these small model parameters correspond to weak links between the network neutrons. Furthermore, Observation 2 suggests that the locations of these weak links are also relatively static over time. Besides, as the loss function in a learning task is typically optimized with a relatively small learning rate, Observation 3 shows that the significant elements of the gradient evolve step by step up to a small variation. These phenomena give rise to a finer prior model for each $x_n^{(t)}$ by incorporating two independent Markov chains for characterizing the temporal evolution of its support $s_n^{(t)}$ and amplitude $r_n^{(t)}$ respectively.
\\ \indent To incorporate the time correlation of $s_n^{(t)}$ in Observation 2, for each $n$, $\{s_n^{(t)}\}_{t=1}^T$  is modeled as an independent Markov chain as
\begin{equation}
    p(s_n^{(t)}|s_n^{(t-1)})=
\begin{cases}
(1-p_{10})^{1-s_n^{(t)}}p_{10}^{s_n^{(t)}}, & s_n^{(t-1)}=0,\\
p_{01}^{1-s_n^{(t)}}(1-p_{01})^{s_n^{(t)}}, & s_n^{(t-1)}=1 \label{support},
\end{cases}
\end{equation}
with transition probabilities $ p_{01} = \mathbb{P}\{s_n^{(t)}=0|s_n^{(t-1)}=1\}$ and $p_{10}= \mathbb{P}\{s_n^{(t)}=1|s_n^{(t-1)}=0\}$ for $t=2,...,T$. Typically, $p_{01}$ and $p_{10}$ tend to be very small since the essential links in the neural network remain almost the same over time, leading to slow transition in the support of $\mathbf{x}^{(t)}$. 
\\ \indent Likewise, the time correlation of each $\{r_n^{(t)}\}_{t=1}^T$ for each $n$ can be modeled as an independent order-1 auto-regression progress as
\begin{equation}
    r_n^{(t)}=\left(1-\beta\right) r_n^{(t-1)}+\beta \omega_n^{(t)} \label{amplitude},
\end{equation}
for $t=2,...,T$. In the above, $\omega_n^{(t)} \sim \mathcal{N}\left(0,\xi\right)$ for all $n$ are IID Gaussian perturbations, and $\beta \in [0,1]$  controls the auto-correlation over time. All the Markov chains in \eqref{support} and \eqref{amplitude} for $n=1,2,...,N$ are assumed to be independent.\\
\indent We remark that similar Bayesian modelling techniques have been previously used in \cite{ziniel2012efficient,chen2017structured} for other signal estimation tasks. Yet,  
to the best of our knowledge, this work is the first attempt to exploit the temporal structure to assist the gradient aggregation in the FEEL scenario. 
\subsection{Problem Formulation}
Recall that the task of the PS in each round $t$ is to reconstruct $\mathbf{x}^{(t)}$ given $\{{\mathbf{y}}^{(j)}\}_{j=1}^{(t)}$. Based on the probability model above, we formulate this task as an online Bayesian inference problem. Specifically, from \eqref{overall}, \eqref{support}, \eqref{amplitude} and \eqref{model2}, the joint posterior probabilistic density function (PDF) of $\mathbf{x}^{(t)}$,$\{\mathbf{s}^{(j)}\}_{j=1}^t,\{\mathbf{r}^{(j)}\}_{j=1}^t$ given $\mathbf{y}^{(t)}$ can be expressed as
\begin{multline}
    p\left(\{{\mathbf{x}}^{(j)}\}_{j=1}^t,\{\mathbf{s}^{(j)}\}_{j=1}^t,\{\mathbf{r}^{(j)}\}_{j=1}^t|\{\mathbf{y}^{(j)}\}_{j=1}^t \right)\\ \propto 
    \prod_{j=1}^t p\left(\mathbf{y}^{(j)}|\mathbf{x}^{(j)}\right) \prod_{n=1}^{N} p\left(x_n^{(j)}|s_n^{(j)},r_n^{(j)}\right)
    p\left(s_n^{(j)}|s_n^{(j-1)}\right) p\left(r_n^{(j)}|r_n^{(j-1)}\right). \label{post}
\end{multline}
The classic MMSE estimator $\mathbb{E}\left[\mathbf{x}^{(t)}|\{{\mathbf{y}}^{(j)}\}_{j=1}^{(t)}\right]$ for $\mathbf{x}^{(t)}$
is computationally prohibitive. Note that \eqref{post} can be represented by a factor graph as in Fig. 3, where variable nodes appear as white circles and factor nodes appear as black boxes. There is an edge connection between a variable node and a factor node if the variable node appears in the corresponding factor. Based on this factor graph representation, we next present a message-passing based algorithm to approximate the MMSE estimator of $\mathbf{x}^{(t)}$ with low complexity and near-optimal performance.

\begin{figure}[!t]
\centering
\includegraphics [scale=0.40] {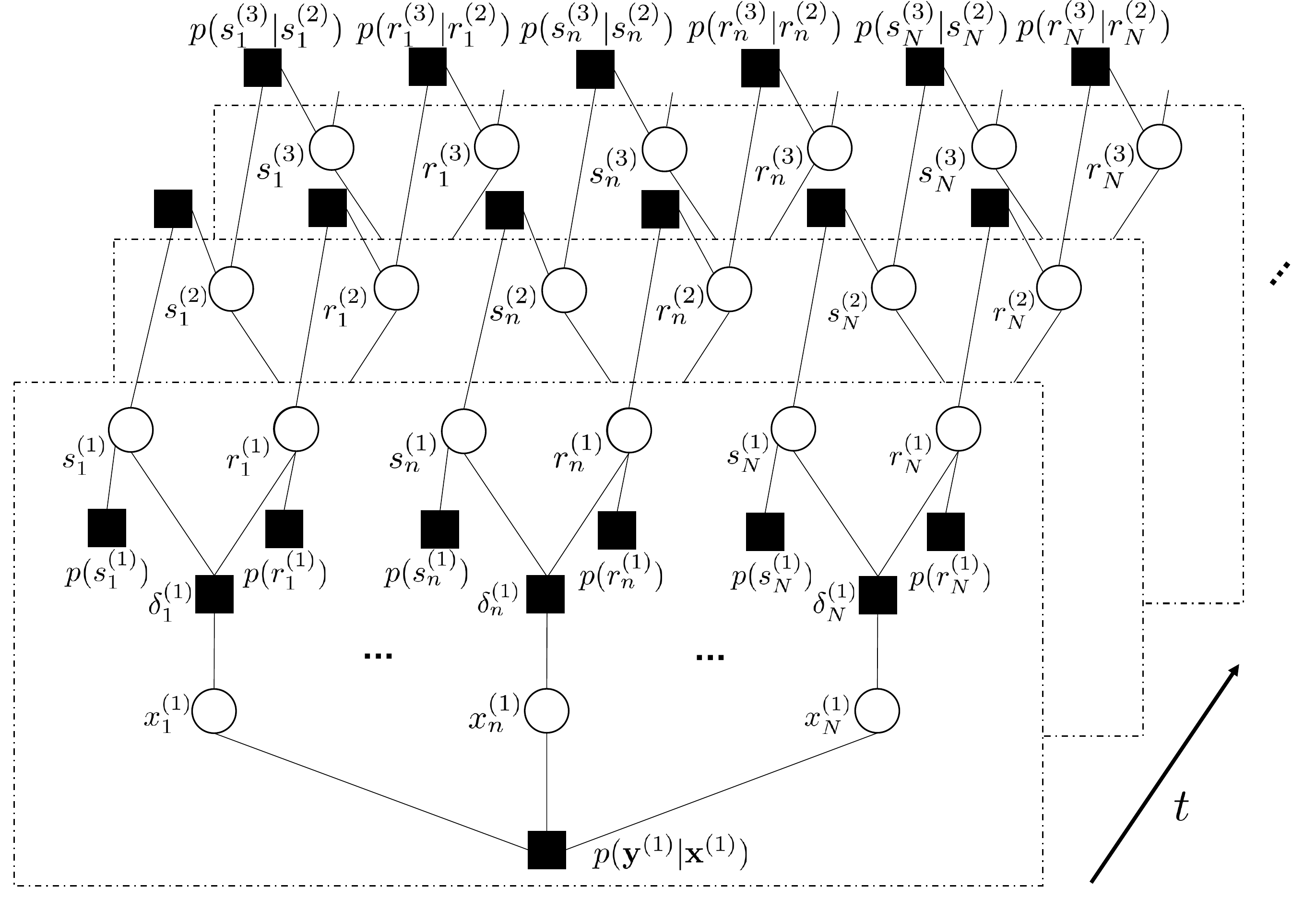}
\caption{Factor graph representation of the PDF \eqref{post}, where we adopt the abbreviation $\delta_n^{(t)}$ for $p(x_n^{(t)}|s_n^{(t)},r_n^{(t)})$.}
\label{factor graph}
\end{figure}

\section{Gradient Aggregation Recovery Algorithm}
In this section, we propose a message-passing based solution, referred to as TSA-GA, to the aforementioned online Bayesian inference problem.
 The proposed algorithm for recovering gradient aggregation is an online extension of the Turbo-CS algorithm in \cite{ma2014turbo} by incorparating the underlying Markovian prior of $\{\mathbf{x}^{(t)}\}_{t=1}^T$. 

\subsection{Algorithm Design}
We now derive the turbo message passing algorithm by following the sum-product rule over the graph in Fig. 3 \cite{pearl2014probabilistic,kschischang2001factor}.
\subsubsection{Messages between $x_n^{(t)}$ and $p(\mathbf{y}^{(t)}|\mathbf{x}^{(t)})$}
For notation brevity, we write node $p(\mathbf{y}^{(t)}|\mathbf{x}^{(t)})$ as $\mathbf{y}^{(t)}$ and $\delta_n^{(t)}$ as $p(x_n^{(t)}|s_n^{(t)},r_n^{(t)})$ in the following derivation. At each round $t$, based on observation $\mathbf{y}^{(t)}$ and the messages from $\delta_n^{(t)}$ to $x_{n}^{(t)}$
(i.e., $\{\nu_{\delta_n^{(t)} \rightarrow x_{n}^{(t)}} (x_{n}^{(t)})\}_{n=1}^N$), an estimation of $\mathbf{x}^{(t)}$ is obtained via turbo message passing between the node $p\left(\mathbf{y}^{(t)}|\mathbf{x}^{(t)}\right)$ and $\mathbf{x}^{(t)}$ \cite{ma2014turbo}. Denote by $\nu_{\mathbf{y}^{(t)} \leftarrow x_{n}^{(t)}}(x_{n}^{(t)})$ the message from $x_{n}^{(t)}$ to $\mathbf{y}^{(t)}$. By the sum-product rule, the message from $\mathbf{y}^{(t)}$ to $x_{n}^{(t)}$ is given by
\begin{equation}
    \nu_{\mathbf{y}^{(t)} \rightarrow x_{n}^{(t)}} (x_{n}^{(t)}) \propto
    \frac{ \int_{ \{x_{i}^{(t)}\}_{i \neq n} } \mathcal{N}(\mathbf{y}^{(t)}; \mathbf{A}^{(t)}\mathbf{x}^{(t)}, \sigma^2 \mathbf{I}) 
    \prod_{i=1}^{N} \nu_{\mathbf{y}^{(t)} \leftarrow x_{i}^{(t)}}(x_{i}^{(t)}) 
     } 
    { \nu_{\mathbf{y}^{(t)} \leftarrow x_{n}^{(t)}}(x_{n}^{(t)}) }. \label{rule1}
\end{equation}
We assume that $\nu_{\mathbf{y}^{(t)} \leftarrow x_{n}^{(t)}}(x_{n}^{(t)})$ is Gaussian with mean $\mu_{\mathbf{y}^{(t)} \leftarrow x_{n}^{(t)}}$ and variance $v_{\mathbf{y}^{(t)} \leftarrow \mathbf{x}^{(t)}}$. The numerator in \eqref{rule1} then reduces to $\mathcal{N}(x_{n}^{(t)};\mu_{\mathbf{y}^{(t)},n},v_{\mathbf{y}^{(t)}})$, where 
\begin{align}
        \boldsymbol{\mu}_{\mathbf{y}^{(t)}} & = \boldsymbol{\mu}_{\mathbf{y}^{(t)} \leftarrow x_{n}^{(t)}} + 
    \frac{ 
    v_{\mathbf{y}^{(t)} \leftarrow \mathbf{x}^{(t)}}}
    { v_{\mathbf{y}^{(t)} \leftarrow \mathbf{x}^{(t)}} + \sigma^2 }
   (\mathbf{A}^{(t)})^{T} (\mathbf{y}^{(t)} - \mathbf{A}^{(t)} \boldsymbol{\mu}_{\mathbf{y}^{(t)} \leftarrow x_{n}^{(t)}}) \label{mu_1},\\
       v_{\mathbf{y}^{(t)}} &= v_{\mathbf{y}^{(t)} \leftarrow \mathbf{x}^{(t)}} - 
    \frac{N}{s} \frac{v_{\mathbf{y}^{(t)} \leftarrow \mathbf{x}^{(t)}}^2} 
    { v_{\mathbf{y}^{(t)} \leftarrow \mathbf{x}^{(t)}} + \sigma^2 }, \label{v_1}
\end{align}
and $\mu_{\mathbf{y}^{(t)},n}$ is the $n$-th element of $ \boldsymbol{\mu}_{\mathbf{y}^{(t)}}$.
Here we have used the partial orthogonal property $\mathbf{A}^{(t)} \left(\mathbf{A}^{(t)}\right)^T=\mathbf{I}$ to simplify the expression. 

Plugging \eqref{mu_1} and \eqref{v_1} into \eqref{rule1}, we obtain 
\begin{equation}
    \nu_{\mathbf{y}^{(t)} \rightarrow x_{n}^{(t)}} (x_{n}^{(t)}) \propto
    \mathcal{N}\left(x_{n}^{(t)}; \mu_{\mathbf{y}^{(t)} \rightarrow x_n^{(t)}}, 
    v_{\mathbf{y}^{(t)}\rightarrow \mathbf{x}^{(t)}}\right) \label{rule11},
\end{equation}
where 
\begin{align}
        \mu_{\mathbf{y}^{(t)} \rightarrow x_n^{(t)}} &= v_{\mathbf{y}^{(t)}\rightarrow \mathbf{x}^{(t)}} \left(\frac{\mu_{\mathbf{y}^{(t)},n}}{v_{\mathbf{y}^{(t)}}} - \frac{\mu_{\mathbf{y}^{(t)} \leftarrow x_{n}^{(t)}}}{v_{\mathbf{y}^{(t)} \leftarrow \mathbf{x}^{(t)}}} \right), \\
    v_{\mathbf{y}^{(t)}\rightarrow \mathbf{x}^{(t)}} &= \left(\frac{1}{v_{\mathbf{y}^{(t)}}} - \frac{1}{v_{\mathbf{y}^{(t)} \leftarrow \mathbf{x}^{(t)}}} \right)^{-1}. \label{v_11}
\end{align}
Combining \eqref{rule1} and \eqref{rule11}, the posterior message of $x_{n}$ is approximated by
\begin{equation}
\nu_{\mathbf{y}^{(t)} \rightarrow x_{n}^{(t)}}(x_{n}^{(t)}) \nu_{\delta_n^{(t)} \rightarrow x_{n}^{(t)}} (x_{n}^{(t)}). \label{post msg}
\end{equation}
The posterior mean $\mu_n$ and variance $v_n$ of $x_{n}$ at iteration $i$ are given respectively by
\begin{align}
        \mu_n^{(t)} &= \mathbb{E}\left[x_{n}^{(t)}
        \right],\\
    v^{(t)} &=  \frac{1}{N} \sum_{n=1}^N  \text{Var}\left[x_{n}^{(t)}
    \right],
\end{align}
where the expectation and variance of $x_n^{(t)}$ are with respect to the posterior message in \eqref{post msg}. From \cite{ma2014turbo,ma2015performance}, 
the mean $\mu_{\mathbf{y}^{(t)} \leftarrow x_{n}^{(t)}}$ and variance $v_{\mathbf{y}^{(t)} \leftarrow \mathbf{x}^{(t)}}$ of $\nu_{\mathbf{y}^{(t)} \leftarrow x_{n}^{(t)}}(x_{n}^{(t)})$ are given by 
\begin{align}
        \mu_{\mathbf{y}^{(t)} \leftarrow x_{n}^{(t)}} &=  v_{\mathbf{y}^{(t)} \leftarrow \mathbf{x}^{(t)}}
    \left( \frac{\mu_n^{(t)}}{v^{(t)}}
    -  \frac{\mu_{\mathbf{y}^{(t)} \rightarrow x_n^{(t)}}}{v_{\mathbf{y}^{(t)}\rightarrow \mathbf{x}^{(t)}}}
    \right), \\
    v_{\mathbf{y}^{(t)} \leftarrow \mathbf{x}^{(t)}} &= 
     \left( \frac{1}{v^{(t)}}
    - \frac{1}{v_{\mathbf{y}^{(t)}\rightarrow \mathbf{x}^{(t)}}} \right).
\end{align}
When a predetermined termination condition is met, $\mathbf{\hat{x}}^{(t)}= [\mu_1^{(t)},\mu_2^{(t)},...,\mu_N^{(t)}]^T$ is returned as the gradient aggregation estimator at the $t$-th communication round. Then, the global FEEL model is updated via \eqref{31}
and broadcast to all the participating edge devices afterwards. 
 \subsubsection{Messages from $\delta_n^{(t)}$ to $s_n^{(t)}$ and $r_n^{(t)}$}
Messages of the $t$-th round are propagated forward to provide the prior knowledge for the $(t+1)$-th round. To this end, we calculate the message along the Markov chains based on the prior models \eqref{support} and \eqref{amplitude}. 
From the sum-product rule, the message from $\delta_n^{(t)}$ to $s_n^{(t)}$ is given by
\begin{align}
   \nu_{\delta_n^{(t)}\rightarrow s_n^{(t)}}\left(s_n^{(t)}\right) 
    \propto {}&
    \int_{x_n^{(t)},r_n^{(t)}} \nu_{{r_n^{(t-1)}}\rightarrow{r_n^{(t)}}} (r_n^{(t)})
     \nu_{\mathbf{y}^{(t)} \rightarrow x_{n}^{(t)}} (x_{n}^{(t)})
    \delta \left(x_n^{(t)}-s_n^{(t)}r_n^{(t)}\right) \notag \\
    \propto {}&     
    \lambda_{\delta_n^{(t)}\rightarrow s_{n}^{(t)}}
    \delta \left(s_n^{(t)} -1\right) +
    \left(1-\lambda_{\delta_n^{(t)}\rightarrow s_{n}^{(t)}}\right)
    \delta\left(s_n^{(t)}\right) \label{delta-s},
\end{align}
with
\begin{equation}
        \lambda_{\delta_n^{(t)}\rightarrow s_{n}^{(t)}} = \left( 1+ 
        \frac{\mathcal{N}\left(0;\mu_{\mathbf{y}^{(t)} \rightarrow x_n^{(t)}}, v_{\mathbf{y}^{(t)}\rightarrow \mathbf{x}^{(t)}}\right)}{
        \mathcal{N}\left(0;\mu_{r_n^{(t-1)} \rightarrow r_n^{(t)}}-\mu_{\mathbf{y}^{(t)} \rightarrow x_n^{(t)}}, v_{r_n^{(t-1)} \rightarrow r_n^{(t)}}+v_{\mathbf{y}^{(t)}\rightarrow \mathbf{x}^{(t)}}\right)}
        \right)^{-1}, 
\end{equation}
where $\nu_{{r_n^{(t-1)}}\rightarrow{r_n^{(t)}}} (r_n^{(t)})=\mathcal{N}(r_n^{(t)};\mu_{r_n^{(t-1)} \rightarrow r_n^{(t)}}, v_{r_n^{(t-1)} \rightarrow r_n^{(t)}})$ is the message passed from $r_n^{(t-1)}$ to $r_n^{(t)}$ via the node $p(r_n^{(t)}|r_n^{(t-1)})$. The message from $\delta_n^{(t)}$ to $r_n^{(t)}$, however, cannot be calculated straightforwardly. 
 This is due to the fact that the inference of $r_n^{(t)}$ is infeasible given  $x_n^{(t)}$ when $s_n^{(t)}=0$, by recalling the model in \eqref{decomp}. To circumvent this difficulty, we follow the modification in \cite{ziniel2012efficient} and regard the model in \eqref{decomp} as a limiting result of $s_n^{(t)} \in \{\epsilon,1\}$ as $\epsilon \rightarrow 0^+$. In this view, the message from $\delta_n^{(t)}$ to $r_n^{(t)}$ for any fixed $\epsilon > 0$ is given by 
\begin{align}
    \nu_{\delta_n^{(t)}\rightarrow r_n^{(t)}}(r_n^{(t)})
    \propto {}&
    \int_{x_n^{(t)},s_n^{(t)}}  \nu_{{s_n^{(t-1)}}\rightarrow{s_n^{(t)}}}({s_n^{(t)}})
\nu_{\mathbf{y}^{(t)} \rightarrow x_{n}^{t}} (x_{n}^{(t)})
    \delta(x_n^{(t)}-s_n^{(t)}r_n^{(t)}) \notag \\
    \propto {} &
    (1-\Omega( \lambda_{s_n^{(t-1)}\rightarrow s_{n}^{(t)}})) \mathcal{N}\left(r_n^{(t)};\frac{1}{\epsilon} \mu_{\mathbf{y}^{(t)} \rightarrow x_n^{(t)}},\frac{1}{\epsilon^2} v_{\mathbf{y}^{(t)}\rightarrow \mathbf{x}^{(t)}}\right) \notag \\
    {} & +
    \Omega( \lambda_{s_n^{(t-1)}\rightarrow s_{n}^{(t)}}) \mathcal{N}\left(r_n^{(t)}; \mu_{\mathbf{y}^{(t)} \rightarrow x_n^{(t)}}, v_{\mathbf{y}^{(t)}\rightarrow \mathbf{x}^{(t)}}\right) \label{mixture},
\end{align}
with
\begin{equation}
    \Omega\left( \lambda_{s_n^{(t-1)}\rightarrow s_{n}^{(t)}}\right) = \frac{\epsilon  \lambda_{s_n^{(t-1)}\rightarrow s_{n}^{(t)}}}
                {\left(1- \lambda_{s_n^{(t-1)}\rightarrow s_{n}^{(t)}} \right) + \epsilon     \lambda_{s_n^{(t-1)}\rightarrow s_{n}^{(t)}}},
\end{equation}
where $  \nu_{{s_n^{(t-1)}}\rightarrow{s_n^{(t)}}}({s_n^{(t)}}) = \lambda_{s_n^{(t-1)}\rightarrow s_{n}^{(t)}} 
    \delta(s_n^{(t)} -1) +
    (1- \lambda_{s_n^{(t-1)}\rightarrow s_{n}^{(t)}})
    \delta(s_n^{(t)})$ is the message passed from $s_n^{(t-1)}$ to $s_n^{(t)}$ via the node $p(s_n^{(t)}|s_n^{(t-1)})$.
As $\epsilon \rightarrow 0^+$, the first component in the last step of \eqref{mixture} tends to be non-informative.
Therefore, to keep \eqref{mixture} meaningful, we set $\epsilon$ to a sufficiently small positive number, e.g. $10^{-7}$. 
Yet, it is still undesirable that \eqref{mixture} takes the form of a Gaussian mixture. Propagating this message exactly will cause an exponential computation burden because each round of forward propagation doubles the number of Gaussian components. Here, we Taylor-expand the logarithm of the message in \eqref{mixture} with respect to $r_n^{(t)}$ at $r_n^{(t)}=\mu_{\mathbf{y}^{(t)} \rightarrow x_n^{(t)}}$ until the second order, yielding a Gaussian approximation 
\begin{equation}
    \nu_{\delta_n^{(t)}\rightarrow r_n^{(t)}}(r_n^{(t)}) \approx \mathcal{N}(r_n^{(t)};
    {\bar{\mu}_n^{(t)}},\bar{v}_n^{(t)}),
\end{equation}
with
\begin{align}
    \bar{\mu}_n^{(t)}\triangleq \mu_{\mathbf{y}^{(t)} \rightarrow x_n^{(t)}}-\left(\frac{d^2 \ln \nu_{\delta_n^{(t)}\rightarrow r_n^{(t)}}}{dr_n^{(t)2}}\right)^{-1}\frac{d\ln \nu_{\delta_n^{(t)}\rightarrow r_n^{(t)}}}{dr_n^{(t)}} \!\text{\!\quad and\quad\!}\!
    \bar{v}_n^{(t)} \triangleq -\left(\frac{d^2 \ln \nu_{\delta_n^{(t)}\rightarrow r_n^{(t)}}}{dr_n^{(t)2}}\right)^{-1}.
\end{align}

\subsubsection{Messages from $s_n^{(t)}$ to $s_n^{(t+1)}$ and from $r_n^{(t)}$ to $r_n^{(t+1)}$}
The message passed from $s_n^{(t)}$ to $s_n^{(t+1)}$ via the node $p(s_n^{(t+1)}|s_n^{(t)})$ is given by
\begin{align}
    \nu_{{s_n^{(t)}}\rightarrow{s_n^{(t+1)}}}({s_n^{(t+1)}})
    \propto{}& 
    \int_{s_n^{(t)}}    
     p(s_n^{(t+1)}|s_n^{(t)})
    \left(\lambda_{\delta_n^{(t)}\rightarrow s_{n}^{(t)}} 
    \delta(s_n^{(t)} -1) +
    (1-\lambda_{\delta_n^{(t)}\rightarrow s_{n}^{(t)}})
    \delta(s_n^{(t)})\right)
      \notag \\
    {} & \times
     \left(\lambda_{s_n^{(t-1)}\rightarrow s_{n}^{(t)}} 
    \delta(s_n^{(t)} -1) +
    (1- \lambda_{s_n^{(t-1)}\rightarrow s_{n}^{(t)}})
    \delta(s_n^{(t)})\right) \notag \\
     \propto{}&
     \lambda_{s_n^{(t)}\rightarrow s_{n}^{(t+1)}} 
    \delta(s_n^{(t+1)} -1) +
    (1- \lambda_{s_n^{(t)}\rightarrow s_{n}^{(t+1)}})
    \delta(s_n^{(t+1)}) \label{across1},
\end{align}
where
\begin{equation}
        \lambda_{s_n^{(t)}\rightarrow s_{n}^{(t+1)}} = \frac{p_{10} (1-\lambda_{s_n^{(t-1)}\rightarrow s_{n}^{(t)}})(1-\lambda_{\delta_n^{(t)}\rightarrow s_{n}^{(t)}}
    + (1-p_{01})\lambda_{s_n^{(t-1)}\rightarrow s_{n}^{(t)}} \lambda_{\delta_n^{(t)}\rightarrow s_{n}^{(t)}}}
    {(1 - \lambda_{s_n^{(t-1)}\rightarrow s_{n}^{(t)}}) (1-\lambda_{\delta_n^{(t)}\rightarrow s_{n}^{(t)}}) + \lambda_{s_n^{(t-1)}\rightarrow s_{n}^{(t)}} \lambda_{\delta_n^{(t)}\rightarrow s_{n}^{(t)}}}.
\end{equation}
Similarly, the message from $r_n^{(t)}$ to $r_n^{(t+1)}$ via the node $p(r_n^{(t+1)}|r_n^{(t)})$ is given by
\begin{align}
    \nu_{{r_n^{(t)}}\rightarrow{r_n^{(t+1)}}} (r_n^{(t+1)})
    \propto{}& 
    \int_{r_n^{(t)}}    
     p(r_n^{(t+1)}|r_n^{(t)})
    \mathcal{N}(r_n^{(t)}; \mu_{\delta_{n}^{(t)} \rightarrow r_n^{(t)}},v_{\delta_{n}^{(t)} \rightarrow r_n^{(t)}}) 
        \mathcal{N}(r_n^{(t)};\mu_{r_n^{(t-1)} \rightarrow r_n^{(t)}}
    ,v_{r_n^{(t-1)} \rightarrow r_n^{(t)}})
      \notag \\
     \propto{}&
    \mathcal{N}(r_n^{(t+1)};\mu_{r_n^{(t)} \rightarrow r_n^{(t+1)}},v_{r_n^{(t)} \rightarrow r_n^{(t+1)}} ), \label{across2}
\end{align}
with
\begin{gather}
    \mu_{r_n^{(t)} \rightarrow r_n^{(t+1)}} = (1-\beta) 
                \frac{v_{r_n^{(t-1)} \rightarrow r_n^{(t)}}  v_{\delta_{n}^{(t)} \rightarrow r_n^{(t)}}}
                {v_{r_n^{(t-1)} \rightarrow r_n^{(t)}}  + v_{\delta_{n}^{(t)} \rightarrow r_n^{(t)}}} 
                \left(\frac{\mu_{\delta_{n}^{(t)} \rightarrow r_n^{(t)}}}  {{v_{\delta_{n}^{(t)} \rightarrow r_n^{(t)}}}} +
                \frac{\mu_{r_n^{(t-1)} \rightarrow r_n^{(t)}}} {v_{r_n^{(t-1)} \rightarrow r_n^{(t)}}}\right)\\
    v_{r_n^{(t)} \rightarrow r_n^{(t+1)}} = (1-\beta)^2
              \frac{v_{r_n^{(t-1)} \rightarrow r_n^{(t)}}  v_{\delta_{n}^{(t)} \rightarrow r_n^{(t)}}}
                {v_{r_n^{(t-1)} \rightarrow r_n^{(t)}}  + v_{\delta_{n}^{(t)} \rightarrow r_n^{(t)}}}  +\beta^2 \xi.
\end{gather}
 From \eqref{across1} and \eqref{across2}, the message from $\delta_n^{(t+1)}$ to $x_{n}^{(t+1)}$ is given by 
\begin{multline}
    \nu_{\delta_n^{(t+1)} \rightarrow x_{n}^{(t+1)}} (x_{n}^{(t+1)}) = \\
    \begin{cases}
     \lambda_{s_n^{(t)}\rightarrow s_{n}^{(t+1)}}\mathcal{N}(x_n^{(t)};\mu_{r_n^{(t)} \rightarrow r_n^{(t+1)}},v_{r_n^{(t)} \rightarrow r_n^{(t+1)}}) + (1-\lambda_{s_n^{(t)}\rightarrow s_{n}^{(t+1)}}) \delta(x_n^{(t)}), & \text{if } t\geq 1,\\
    \lambda \mathcal{N}(x_n^{(1)};0,\gamma) +
    (1-\lambda) \delta(x_n^{(1)}), & \text{if } t=0.\\ 
    \end{cases}
\end{multline}
\subsection{Overall Algorithm}
The overall algorithm is summarized in Algorithm 1.
The computational complexity of Algorithm 1 is dominated by the multiplications associated with $\mathbf{A}$ and $\mathbf{A}^T$ in \eqref{mu_1}. Noting that $\mathbf{A}$ is a partial DCT matrix, we realize these multiplications by the fast DCT and inverse DCT algorithms with $\mathcal{O}(N\text{log}N)$ scalar manipulations. As a result, the total complexity of the proposed algorithm is bounded by $\mathcal{O}(I_{max}N\text{log}N)$ for each communication round, where $I_{max}$ is the pre-defined maximum number of turbo iterations. Compared with compression with IID Gaussian matrix in A-DSGD, the computational complexity of the proposed algorithm is more hardware-friendly thanks to the fast DCT algorithm. Moreover, the proposed algorithm exploits the intrinsic temporal stricture of the gradient signals and enables a more accurate and reliable recovery, thereby accelerating the convergence of the learning process. 

\begin{algorithm}[!t] 
\caption{TSA-GA Algorithm for Over-the-Air FEEL} 
\label{alg1} 
\begin{algorithmic}[1] 
\REQUIRE ~~\\ 
$\lambda, p_{01}, p_{10}, \gamma, \beta, \xi, \epsilon, I_{max}$\\
\FOR{$t=1,2,...,T$}
\STATE{PS receives $\mathbf{y}^{(t)}$}\\
\STATE Calculate the message from $\delta_n^{(t)}$ to $x_n^{(t)}$ via (43)\\
\STATE Initialize $v_{\mathbf{y}^{(t)} \leftarrow \mathbf{x}^{(t)}}= \gamma$ for $t=1$ and $v_{r_n^{(t-1)} \rightarrow r_n^{(t)}}$ otherwise, $\mu_{\mathbf{y}^{(t)} \leftarrow x_{n}^{(t)}}=0$ 
\WHILE{some terminal criterion is not met}
\STATE Calculate the message from $\mathbf{y}^{(t)}$ to $x_{n}^{(t)}$ via (22)(23)(25)(26) \% linear estimator
\STATE Calculate the message from $x_{n}^{(t)}$ to $\mathbf{y}^{(t)}$ via (28)(29)(30)(31) \% MMSE denoiser
\ENDWHILE
\STATE Global model $\boldsymbol{\theta}^{(t)}$ update via (15) and broadcast
\STATE Calculate the message from $\delta_n^{(t)}$ to $s_n^{(t)}$ via (33)
\STATE Calculate the message from $\delta_n^{(t)}$ to $r_n^{(t)}$ via (35)(37)
\STATE Calculate the message from $s_n^{(t)}$ to $s_n^{(t+1)}$ via (39)
\STATE Calculate the message from $r_n^{(t)}$ to $r_n^{(t+1)}$ via (41)(42)
\STATE Decide chain parameters for the next round \% by invoking Algorithm 2
\ENDFOR
\end{algorithmic}
\end{algorithm}

\section{Convergence Analysis}
In this section, we present the theoretical analysis of the TSA-GA algorithm. The main contribution is the development of the state evolution (SE) analysis for signals with underlying Markovian temporal structure. 

\subsection{State Evolution}
Recall the signal model of $\{\mathbf{x}^{(t)}\}_{t=1}^T$ to be estimated with PDF given by 
\begin{equation}
    p(\{\mathbf{x}^{(t)}\}_{t=1}^T)=\prod_{t=1}^{T} p(\mathbf{x}^{(t)}|\mathbf{x}^{(t-1)})=\prod_{n=1}^N \prod_{t=1}^{T}p(x_n^{(t)}|x_n^{(t-1)}) \label{model_SE},
\end{equation}
Note that each component of $\mathbf{x}^{(t)}$ evolves independently and identically in a Markovian manner. Thus, we simply refer to $x_n^{(t)}$ by omitting subscript as $x^{(t)}$ without causing confusion.\\ 
\indent In each communication round, the proposed online TSA-GA algorithm can be treated as the iteration between two modules, namely, the linear estimation module and the MMSE denoiser \cite{ma2014turbo}. The linear estimation module handles the linear constraint \eqref{model2}, and consists of the update equations (22) (23) (25) (26); see Line 6 of Algorithm 1. The MMSE denoiser suppresses the estimation error by exploiting the prior of $\mathbf{x}^{(t)}$, and consists of the update equations (28) (29) (30) (31); see Line 7 of Algorithm 1. The iteration between these two modules continues until convergence, as illustrated in Fig. 4(a). Based on this block diagram representation, for each round $t$, we define the states of the linear estimation module and the MMSE denoiser at iteration $i$ respectively as 
\begin{gather}
    \tau^{(t)}_i \triangleq \frac{1}{N}\Vert \boldsymbol{\mu}^{(t)}_{\mathbf{y} \rightarrow \mathbf{x}}-\mathbf{x}^{(t)} \Vert_2^2 \label{def_tau}, \\
    v^{(t)}_i \triangleq \frac{1}{N}\Vert \boldsymbol{\mu}^{(t)}_{\mathbf{y} \leftarrow \mathbf{x}}-\mathbf{x}^{(t)} \Vert_2^2 \label{def_v},
\end{gather}
where $\boldsymbol{\mu}^{(t)}_{\mathbf{y} \rightarrow \mathbf{x}} \triangleq [
\mu_{\mathbf{y}^{(t)} \rightarrow x_1^{(t)}},...,\mu_{\mathbf{y}^{(t)} \rightarrow x_N^{(t)}}]^{T}$ is the output of the linear estimation module and $\boldsymbol{\mu}^{(t)}_{\mathbf{y} \leftarrow \mathbf{x}} \triangleq [
\mu_{\mathbf{y}^{(t)} \leftarrow x_1^{(t)}},...,\mu_{\mathbf{y}^{(t)} \leftarrow x_N^{(t)}}]^{T}$ is the output of the MMSE denoiser at each round $t$. It is known from \cite{ma2015performance}
that, in the large system limit\footnote{Precisely, the large system limit means $N \rightarrow \infty$, $M \rightarrow \infty$ while $M/N$ is kept to constant.}, the output of the linear estimation module can be modelled as scalar observations of
\begin{equation}
    z_i^{(t)} = x^{(t)} + \sqrt{\tau_i^{(t)}} w_i^{(t)} \label{model3},
\end{equation}
where the subscript $i$ stands for the $i$-th turbo iteration and $w_i^{(t)} \sim \mathcal{N}(0,1)$ is the observation noise independent of the signal series $\{\mathbf{x}^{(t)}\}_{t=1}^T$. 
Thus, the behaviour of the turbo iteration is reduced to a scalar recursion between the two state variables:  
\begin{gather}
    \tau^{(t)}_i = f(v^{(t)}_i) \triangleq \left(\frac{N}{s}-1 \right) v^{(t)}_i + \sigma^2  \label{f},\\
    v^{(t)}_{i+1} = g_t(\tau^{(t)}_i) \triangleq \left( \frac{1}{\phi_t(\tau^{(t)}_i)} -\frac{1}{\tau^{(t)}_i} \right)^{-1} \label{g},
\end{gather}
with 
\begin{gather}
        \phi_t(\tau^{(t)}_i) \triangleq mmse\left(x^{(t)}|x^{(t)}+\sqrt{\tau^{(t)}}w^{(t)};
        \{\nu_{\delta_n^{(t)} \rightarrow x_{n}^{(t)}} (x_{n}^{(t)})\}_{n=1}^N
        \right) \label{phi1},
\end{gather}
where  $f(\cdot)$ and $g_t(\cdot)$ respectively represents the transfer functions of the linear estimation module and the MMSE denoiser, the $mmse$ is taken with respect to the prior $\{\nu_{\delta_n^{(t)} \rightarrow x_{n}^{(t)}} (x_{n}^{(t)})\}_{n=1}^N$, and $\{\mathbf{w}^{(t)}\}_{t=1}^{T}$ are IID Gaussian noise series independent of $\{\mathbf{x}^{(t)}\}_{t=1}^{T}$ with zero mean and unit variance. The initialization is $v^{(t)}_0 ={\rm Var}[x^{(t)}]$. The above recursion is depicted in Fig. 4(a). \\
\indent The $mmse$ in $\eqref{phi1}$ involves the messages from the $(t-1)$-th round, i.e.,  $\{\nu_{\delta_n^{(t)} \rightarrow x_{n}^{(t)}} (x_{n}^{(t)})\}_{n=1}^N$, and thus is difficult to analyze. To avoid this difficulty, we first represent the MMSE denoiser in Fig. 4(a) by its equivalent form in Fig 4(b); see the dotted boxes in Fig. 4. Clearly, the inputs of the MMSE denoisers for round $1$ to $t-1$ can be similarly modelled by \eqref{model3}. It is natural to assume that the turbo iteration in each communication round always converges prior to the beginning of the next round. Thus, for round $t$, the input of the round-$t'$ MMSE denoiser can be modelled by 
\begin{equation}
    z_*^{(t')} = x^{(t')} + \sqrt{\tau_*^{(t')}} w_*^{(t')} \text{\quad for \quad} t'=1,...,t-1,
\end{equation}
where $\tau_*^{(t')}$ is the fixed point of $\{ \tau_i^{(t')}\}_i$. Then,
we rewrite $\phi_t(\cdot)$ in \eqref{phi1} as 
\begin{gather}
    \phi_t(\tau^{(t)}_i;\tau^{(t-1)}_*,...,\tau^{(1)}_*) = mmse\left(x^{(t)}|x^{(t)}+\sqrt{\tau^{(t)}}w^{(t)};\mathcal{Z}_*^{(t),pre}\right) \label{phi},\\
    \mathcal{Z}_*^{(t),pre} \triangleq \left\{ x^{(t-1)}+\sqrt{\tau^{(t-1)}_*}w^{(t-1)},...,x^{(1)}+\sqrt{\tau^{(1)}_*}w^{(1)} \right\} \label{z^pre},
\end{gather}
where the $mmse$ is taken with respect to $\{\mathbf{x}^{(t)}\}_{t=1}^T$ modelled in \eqref{model_SE}. Through \eqref{phi}, the transfer function $g_t(\cdot)$ at the $t$-th round in \eqref{g} is dependent on the fixed points in all the past rounds, i.e., $\tau^{(t-1)}_*,...,\tau^{(1)}_*$.  
The next theorem states the monotonicity of the sequence $\{ \tau_i^{(t)}\}_i$ and therefore 
makes sure the existence of the SE fixed point $\tau_*^{(t)}$.
\begin{figure}[!t]
\centering
\subfigure[The block diagram for SE recursion at round $t$]{
\centering
\includegraphics[scale=0.40]{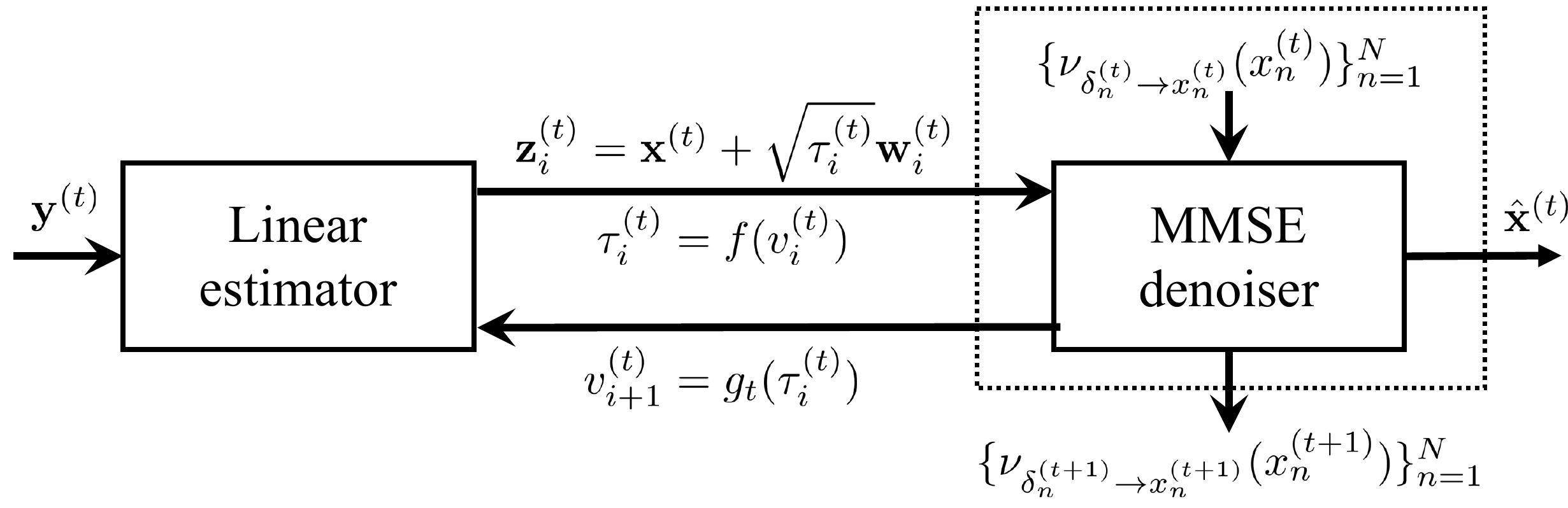}
}%

\subfigure[The block diagram for SE recursion at round $t$]{
\centering
\includegraphics[scale=0.40]{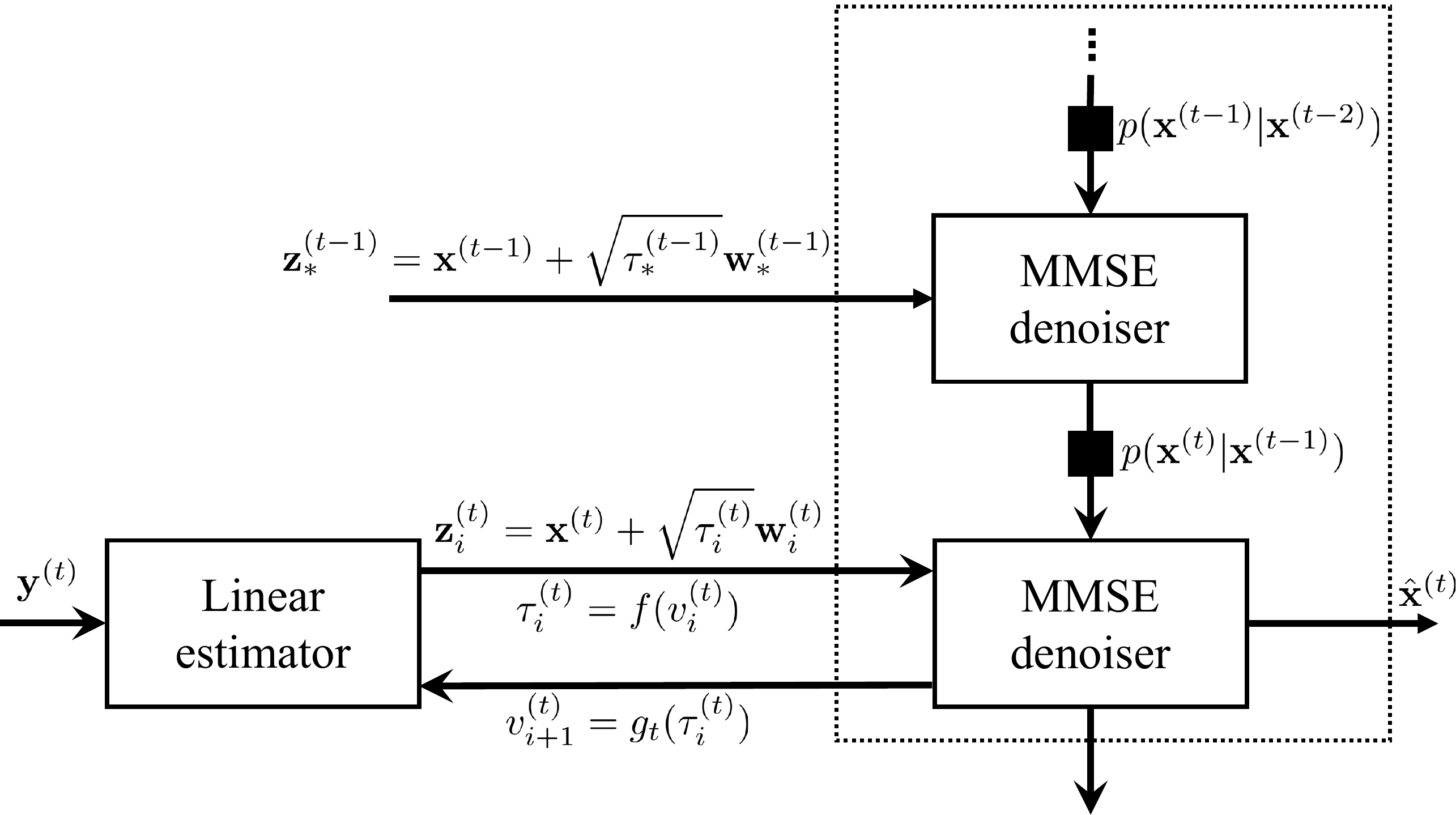}
}%
\centering
\caption{Illustration for SE recursion \eqref{f}\eqref{g}.}
\end{figure}


\newtheorem{myThm}{Theorem}
\begin{myThm} 
By recursion \eqref{f} and \eqref{g}, it holds that
\begin{gather}
    \sigma^2 \leq ... \leq \tau^{(t)}_{i} \leq \tau^{(t)}_{i-1} \leq ...<\tau^{(t)}_1 \label{Th1-1},\\
      0 \leq ... \leq v^{(t)}_{i} \leq v^{(t)}_{i-1}\leq...\leq v^{(t)}_1 \label{Th1-2},
\end{gather}
for $\forall t$, and therefore the recursion convergences to some fixed point.
\end{myThm}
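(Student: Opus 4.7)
The plan is to prove both chains of inequalities by a joint induction on the turbo iteration index $i$, exploiting the interleaved structure of the recursion $\tau_i^{(t)}=f(v_i^{(t)})$, $v_{i+1}^{(t)}=g_t(\tau_i^{(t)})$. The core reduction is: provided both $f$ and $g_t$ are non-decreasing on the range of iterates, any decrease from $v_{i-1}^{(t)}$ to $v_i^{(t)}$ propagates through $f$ to a decrease from $\tau_{i-1}^{(t)}$ to $\tau_i^{(t)}$, which in turn propagates through $g_t$ to a decrease from $v_i^{(t)}$ to $v_{i+1}^{(t)}$; iterating this argument delivers both chains simultaneously. The lower bounds come for free from the definitions: $f(v)\ge \sigma^2$ whenever $v\ge 0$, and $v_i^{(t)}\ge 0$ by its interpretation as a squared-error variance in \eqref{def_v}.

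Monotonicity of $f(v)=(N/s-1)v+\sigma^2$ is immediate from its affine form with positive slope in the compressive regime $s<N$; in fact $f$ is strictly increasing, and I would use this to upgrade the inequality $\tau_1^{(t)}>\tau_2^{(t)}$ in \eqref{Th1-1} to a strict one. The main obstacle is establishing monotonicity of $g_t(\tau)=\tau\phi_t(\tau)/(\tau-\phi_t(\tau))$. I plan to base this on two standard scalar-channel MMSE properties of $\phi_t$: (i) $\phi_t$ is non-decreasing in $\tau$, and (ii) $0\le\phi_t(\tau)\le\tau\,\text{Var}[x^{(t)}]/(\tau+\text{Var}[x^{(t)}])$, the linear-estimator upper bound, which in particular gives $\phi_t(\tau)\le\tau$ so that $g_t$ is well-defined and non-negative. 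A direct differentiation of $g_t$ yields $dg_t/d\tau=(\tau^2\phi_t'-\phi_t^2)/(\tau-\phi_t)^2$, so monotonicity reduces to the pointwise inequality $\tau^2\phi_t'(\tau)\ge\phi_t(\tau)^2$; this holds with equality for Gaussian priors and, more generally, for the class of priors treated in the Turbo-CS state-evolution analysis of \cite{ma2014turbo,ma2015performance}, so the result can be imported directly. The one genuinely new subtlety compared with those single-round settings is that $\phi_t$ now incorporates conditioning on the past observations $\mathcal{Z}_*^{(t),pre}$ via \eqref{phi}--\eqref{z^pre}; however, once $\mathcal{Z}_*^{(t),pre}$ is fixed, $\tau\mapsto\phi_t(\tau;\tau_*^{(t-1)},\ldots,\tau_*^{(1)})$ is still a scalar-Gaussian-channel MMSE, now with respect to the conditional prior of $x^{(t)}$ given $\mathcal{Z}_*^{(t),pre}$, and the scalar-channel monotonicity argument carries over unchanged.

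For the base case, I would verify $v_1^{(t)}\le v_0^{(t)}$ starting from the initialization $v_0^{(t)}=\text{Var}[x^{(t)}]$. Plugging into $v_1^{(t)}=g_t(\tau_0^{(t)})=(1/\phi_t(\tau_0^{(t)})-1/\tau_0^{(t)})^{-1}$ and applying the linear-estimator bound on $\phi_t$ quoted above gives $v_1^{(t)}\le\text{Var}[x^{(t)}]=v_0^{(t)}$ after a short algebraic manipulation; alternating applications of the monotone maps $f$ and $g_t$ then propagate this to the full chains \eqref{Th1-1}--\eqref{Th1-2}. Finally, since $\{\tau_i^{(t)}\}_i$ and $\{v_i^{(t)}\}_i$ are monotonically non-increasing and bounded below by $\sigma^2$ and $0$ respectively, the monotone-convergence theorem guarantees the existence of fixed points $\tau_*^{(t)}$ and $v_*^{(t)}$ satisfying $\tau_*^{(t)}=f(v_*^{(t)})$ and $v_*^{(t)}=g_t(\tau_*^{(t)})$, which closes the proof.
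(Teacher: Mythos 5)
Your proposal is correct and follows essentially the same route as the paper's Appendix~B: a base case obtained from the linear-MMSE upper bound on $\phi_t$ (the paper's citation of \cite[Proposition 15]{guo2011estimation}), an inductive step driven by the monotonicity of $f$ and $g_t$, and monotone bounded convergence for the fixed point. The only cosmetic difference is that you differentiate $g_t$ directly while the paper's Lemma~2 differentiates its reciprocal $\psi_t=1/\phi_t-1/\tau$, but both reduce to the same key inequality $\tau^2\phi_t'(\tau)\ge\phi_t(\tau)^2$, which the paper obtains from \cite[Corollary 2]{guo2011estimation} together with Jensen's inequality applied to the conditional variance.
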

\begin{proof}
See Appendix B.
\end{proof}
We now present the main theorem, which coincides with the intuition that taking the temporal structure into account improves the estimation over time.
\begin{myThm} Assume that $\{\mathbf{x}^{(t)}\}_{t=1}^{T}$ is strictly stationary. Then, for any $i=1,2,...$, we have
\begin{gather}
        \tau^{(t)}_{i} \leq \tau^{(t-1)}_{i} \leq ...<\tau^{(1)}_i \label{Th2-1},\\
        v^{(t)}_{i} \leq v^{(t-1)}_{i}\leq...\leq v^{(1)}_i. \label{Th2-2}
\end{gather}
In particular, the same monotonicity holds for the fixed points $\{\tau_*^{(t)}\}_{t=1}^T$ and $\{v_*^{(t)}\}_{t=1}^T$.
\end{myThm}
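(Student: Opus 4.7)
My plan is to induct on the round index $t$ and, within each outer step, to prove a pointwise comparison $\phi_t(\tau) \leq \phi_{t-1}(\tau)$ for all $\tau > 0$, then to lift it through the recursion $(f, g)$ by an inner induction on the iteration index $i$.

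First, I handle the outer induction on $t$. The case $t=1$ is vacuous, and for the inductive step I may assume the fixed-point ordering $\tau_*^{(1)} \geq \tau_*^{(2)} \geq \cdots \geq \tau_*^{(t-1)}$ already established at previous rounds. Using strict stationarity of $\{\mathbf{x}^{(t)}\}$ in \eqref{model_SE} together with the independence and identical distribution of the auxiliary noises $\{w^{(j)}\}$, I relabel indices so that $\phi_{t-1}(\tau;\tau_*^{(t-2)},\ldots,\tau_*^{(1)})$ becomes the MMSE of $x^{(t)}$ given $x^{(t)}+\sqrt{\tau}\,w^{(t)}$ together with the shifted past $\{x^{(t-1)}+\sqrt{\tau_*^{(t-2)}}\,w^{(t-1)},\ldots, x^{(2)}+\sqrt{\tau_*^{(1)}}\,w^{(2)}\}$, which is legitimate because the joint law is invariant under time shift.

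Next, I compare this shifted expression with $\phi_t(\tau;\tau_*^{(t-1)},\ldots,\tau_*^{(1)})$, the MMSE of $x^{(t)}$ given $x^{(t)}+\sqrt{\tau}\,w^{(t)}$ plus $\{x^{(t-1)}+\sqrt{\tau_*^{(t-1)}}\,w^{(t-1)},\ldots, x^{(1)}+\sqrt{\tau_*^{(1)}}\,w^{(1)}\}$. Two improvements go in the same direction: (i) $\phi_t$ contains one additional noisy observation of $x^{(1)}$, and (ii) for each $j = 2,\ldots,t-1$ the noise variance used in $\phi_t$ is $\tau_*^{(j)}$, which the inductive hypothesis bounds by the variance $\tau_*^{(j-1)}$ used in the shifted $\phi_{t-1}$. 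Each of these refinements enlarges the conditioning $\sigma$-algebra in the Le Cam/Blackwell sense, so the MMSE can only drop, giving the pointwise inequality $\phi_t(\tau) \leq \phi_{t-1}(\tau)$ for every $\tau > 0$.

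Finally, I propagate this through the recursion by induction on $i$. The base case is $v_0^{(t)} = \mathrm{Var}[x^{(t)}] = \mathrm{Var}[x^{(t-1)}] = v_0^{(t-1)}$ by stationarity. For the inductive step, since $f$ in \eqref{f} is affine with positive slope, $v_i^{(t)} \leq v_i^{(t-1)}$ implies $\tau_i^{(t)} \leq \tau_i^{(t-1)}$. Using the monotonicity of $g_t$ in its scalar argument (a standard property of the Turbo-CS extrinsic MMSE denoiser that also underlies Theorem 1) together with the pointwise bound $g_t \leq g_{t-1}$ following from the $\phi$-inequality above, I obtain $v_{i+1}^{(t)} = g_t(\tau_i^{(t)}) \leq g_t(\tau_i^{(t-1)}) \leq g_{t-1}(\tau_i^{(t-1)}) = v_{i+1}^{(t-1)}$. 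Sending $i \to \infty$ and invoking Theorem 1 transfers the inequality to fixed points, closing the outer induction. The main obstacle is the $\phi$-comparison step: it requires simultaneously using stationarity to rename past signals and invoking the outer induction hypothesis on $\tau_*^{(\cdot)}$ to compare dissimilar noise profiles across $\phi_t$ and $\phi_{t-1}$; once that pointwise inequality is in hand, the rest is routine monotone bookkeeping.
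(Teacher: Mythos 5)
Your proposal is correct and follows essentially the same route as the paper's Appendix C: a nested induction over $t$ and $i$, with stationarity used to relabel the past observations, the outer inductive hypothesis on $\{\tau_*^{(j)}\}$ combined with a noise-degradation argument (the paper's Lemma 1, proved via a cascaded AWGN construction) to get $\phi_t \leq \phi_{t-1}$, and the monotonicity of $f$ and $g_t$ (Lemma 2) to propagate the inequality to $v_i^{(t)}$, $\tau_i^{(t)}$ and their fixed points. The only cosmetic difference is that your justification of the noise-variance comparison as ``enlarging the conditioning $\sigma$-algebra'' is really a Blackwell-degradation/sufficiency argument rather than a literal $\sigma$-algebra inclusion, which is exactly what the paper's Lemma 1 formalizes.
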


\begin{proof}
See Appendix C.
\end{proof}
\subsection{Convergence analysis of FEEL}
Based on Theorem 2, we next analyse the convergence rate of the proposed TSA-GA algorithm for over-the-air FEEL. For simplicity, we only focus on the case $E=1$. Our analysis is based on the assumptions below, which are standard in the stochastic optimization literature\cite{han2015learning}.\\

\noindent \textbf{Assumptions}
\vspace{2ex}
\begin{itemize}
\item [1)] The loss function $\mathcal{L}(\mathbf{u})$ is $c$-stronly convex, i.e., $ \forall \mathbf{u},\mathbf{v} \in \mathbb{R}^d$,
\begin{equation}
    \mathcal{L}(\mathbf{u})-\mathcal{L}(\mathbf{v}) \geq \nabla \mathcal{L}(\mathbf{u})^{T}(\mathbf{v}-\mathbf{u}) +\frac{c}{2}\Vert\mathbf{u}-\mathbf{v}\Vert_2^2.
\end{equation}

\item [2)] The gradient of $\mathcal{L}(\mathbf{u})$ is $L$-Lipschitz, i.e., $ \forall \mathbf{u},\mathbf{v} \in \mathbb{R}^d$,
\begin{equation}
    \Vert \nabla \mathcal{L}(\mathbf{u})- \nabla \mathcal{L}(\mathbf{v}) \Vert \leq L \Vert \mathbf{u}-\mathbf{v} \Vert.
\end{equation}
\item [3)]
The $\ell_2$ norm of the gradient of $\mathcal{L}(\boldsymbol{\theta}^{(t)})$ is bounded over time, i.e., $\forall t$,
\begin{equation}
\Vert \nabla \mathcal{L}(\boldsymbol{\theta}^{(t)}) \Vert \leq G,
\end{equation}
where $G$ is a constant.
\end{itemize}

\begin{myThm} Let learning rate $\eta^{(t)} = \frac{1}{L}, \forall t$ be fixed. After $T$ rounds of communication, the expected learning loss satisfies
\begin{equation}
    \mathbb{E}[\mathcal{L}(\boldsymbol{\theta}^{(T+1)})]-\mathcal{L}(\boldsymbol{\theta}^{*}) \leq \left( \mathbb{E}[\mathcal{L}(\boldsymbol{\theta}^{(1)})]-\mathcal{L}(\boldsymbol{\theta}^{*}) \right) \left (1-\frac{c}{L} \right)^T + \frac{1}{L} \sum_{t=1}^{T} \left (1-\frac{c}{L} \right)^{T-t} \kappa^{(t)} \label{convergence},
\end{equation}
where 
\begin{align}
    \kappa^{(t)} \triangleq  \phi_t \left(\tau^{(t)}_*;\tau^{(t-1)}_*,...,\tau^{(1)}_*\right) + \left[G \rho \left( \frac{(1+\rho)(1-\rho^t)}{1-\rho}+1 \right) \right]^2 \label{kappa},
\end{align}
with $\rho \triangleq \sqrt{\frac{N-k}{N}}$ and $\boldsymbol{\theta}^{*}$ being the optimal model parameter. The expectation is with respect to the randomness of compression and the MAC channel noise.
\end{myThm}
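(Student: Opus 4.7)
The plan is to combine the standard strongly-convex descent analysis with an error decomposition that isolates the two sources of distortion in the PS update: the turbo message-passing reconstruction error (controlled by the SE fixed point of Theorem~1, propagated across rounds by Theorem~2) and the error-feedback sparsification bias.

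First I would apply the $L$-smoothness descent inequality to the update $\boldsymbol{\theta}^{(t+1)} = \boldsymbol{\theta}^{(t)} - (1/L)\hat{\mathbf{x}}^{(t)}$ and introduce the residual $\mathbf{e}^{(t)} \triangleq \hat{\mathbf{x}}^{(t)} - \nabla\mathcal{L}(\boldsymbol{\theta}^{(t)})$. Expanding $\|\hat{\mathbf{x}}^{(t)}\|^2 = \|\nabla\mathcal{L}(\boldsymbol{\theta}^{(t)}) + \mathbf{e}^{(t)}\|^2$ and cancelling the cross term against $-\tfrac{1}{L}\nabla\mathcal{L}(\boldsymbol{\theta}^{(t)})^T \hat{\mathbf{x}}^{(t)}$ yields
\begin{equation*}
\mathcal{L}(\boldsymbol{\theta}^{(t+1)}) \leq \mathcal{L}(\boldsymbol{\theta}^{(t)}) - \tfrac{1}{2L}\|\nabla\mathcal{L}(\boldsymbol{\theta}^{(t)})\|^2 + \tfrac{1}{2L}\|\mathbf{e}^{(t)}\|^2.
\end{equation*}
Invoking $c$-strong convexity (Assumption~1) through $\|\nabla\mathcal{L}(\boldsymbol{\theta}^{(t)})\|^2 \geq 2c(\mathcal{L}(\boldsymbol{\theta}^{(t)}) - \mathcal{L}(\boldsymbol{\theta}^*))$ produces the one-step recursion
\begin{equation*}
\mathbb{E}[\mathcal{L}(\boldsymbol{\theta}^{(t+1)})] - \mathcal{L}(\boldsymbol{\theta}^*) \leq (1 - c/L)\bigl(\mathbb{E}[\mathcal{L}(\boldsymbol{\theta}^{(t)})] - \mathcal{L}(\boldsymbol{\theta}^*)\bigr) + \tfrac{1}{2L}\mathbb{E}[\|\mathbf{e}^{(t)}\|^2],
\end{equation*}
which unrolls into the geometric-sum skeleton of \eqref{convergence}.

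The second phase is to bound $\mathbb{E}[\|\mathbf{e}^{(t)}\|^2]$ by (a constant multiple of) $\kappa^{(t)}$. I would split $\mathbf{e}^{(t)} = (\hat{\mathbf{x}}^{(t)} - \mathbf{x}^{(t)}) + (\mathbf{x}^{(t)} - \nabla\mathcal{L}(\boldsymbol{\theta}^{(t)}))$. The first summand is the turbo-CS reconstruction error; by Theorem~1 the scalar recursion \eqref{f}--\eqref{g} converges to the SE fixed point, and the MMSE denoiser there attains the error $\phi_t(\tau^{(t)}_*;\tau^{(t-1)}_*,\ldots,\tau^{(1)}_*)$ appearing as the first term of $\kappa^{(t)}$. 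The second summand is the deterministic aggregate sparsification bias: since $\mathrm{sp}_k$ is a $\rho$-contractive compressor with $\rho = \sqrt{(N-k)/N}$, the feedback recursion \eqref{Delta} together with $\|\mathbf{g}_m^{(t)}\|\leq G$ (Assumption~3) gives inductively $\|\Delta_m^{(t)}\| \leq G\rho(1-\rho^t)/(1-\rho)$, and from $\mathbf{g}_m^{sp^{(t)}} - \mathbf{g}_m^{(t)} = \Delta_m^{(t)} - \Delta_m^{(t+1)}$ the triangle inequality and convex combination across devices deliver $\|\mathbf{x}^{(t)} - \nabla\mathcal{L}(\boldsymbol{\theta}^{(t)})\| \leq G\rho\bigl(\tfrac{(1+\rho)(1-\rho^t)}{1-\rho} + 1\bigr)$, whose square is the second term of $\kappa^{(t)}$.

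The principal obstacle is handling the cross term between the stochastic reconstruction error and the deterministic sparsification bias without loss of sharpness. The MMSE orthogonality property (the reconstruction error has zero conditional mean given the observation history, hence is $L^2$-orthogonal to any deterministic perturbation) lets the two contributions add rather than interact, and any residual factor of $2$ from a Cauchy--Schwarz split is absorbed into the $1/L$ coefficient outside $\kappa^{(t)}$. A secondary subtlety is that the SE description \eqref{f}--\eqref{phi} is asymptotic in $N$, so the reconstruction-error identification, and therefore Theorem~3 itself, should be read in the same large-system regime already invoked in Theorems~1 and~2.
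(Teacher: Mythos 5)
Your proposal is correct and follows essentially the same route as the paper: a strongly-convex descent recursion (the paper cites Friedlander--Schmidt's Theorem 2.2 for the one-step inequality you derive by hand), followed by the identical split of the update error into the SE-fixed-point reconstruction term $\phi_t(\tau^{(t)}_*;\ldots)$ and the error-feedback sparsification bias bounded via $\|\Delta_m^{(t)}\|\leq G\rho(1-\rho^{t-1})/(1-\rho)$. Your observation that the factor of $2$ from the quadratic split is absorbed into turning the $1/(2L)$ coefficient into $1/L$ is exactly what happens in the paper's Appendix D.
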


\begin{proof}
See Appendix D.
\end{proof}

We highlight that $\kappa^{(t)}$ in \eqref{kappa} comprises two additive terms that accounts for the deterioration of learning performance incurred by the imperfect message passing recovery at the PS and the sparsification before transmission, respectively. Moreover, given stationary $\{\mathbf{x}^{(t)}\}_{t=1}^{T}$, Theorem 2 states that $\phi_t \left(\tau^{(t)}_*;\tau^{(t-1)}_*,...,\tau^{(1)}_*\right)$ is no greater than $\phi_1 (\tau^{(1)}_*)$. Based on this, we readily obtain that the RHS of \eqref{convergence} is upper bounded by the constant $\frac{1}{c}\bigg{(}\phi_1(\tau^{(1)}_*)+\left(G \rho \left( \frac{1+\rho}{1-\rho}+1 \right) \right)^2\bigg{)}$
as $T \to \infty$.

\section{Real-Time Parameter Learning}
In Section \MakeUppercase{\romannumeral2}, we have assumed the Markovian prior \eqref{support}, \eqref{amplitude} on $\{\mathbf{x}^{(t)}\}_{t=1}^{T}$, where the chain parameter set
$(\lambda,p_{01},p_{10}, \beta, \gamma, \xi) \triangleq \Theta$ remains to be determined in practical implementation of the TSA-GA algorithm; see Line 14 of Algorithm 1. To address this issue, we propose a real-time parameter decision strategy based on the EM principle \cite{dempster1977maximum}. Specifically, at each round $t$, the parameter set $\Theta$ for the next round is learned through the maximum-likelihood statistics using data obtained in the last few rounds, e.g., within a length-$t_0$ time window from $t-t_0+2$ to $t+1$. 
We then formulate the following EM objective:
\begin{align}
    Q(\Theta) ={}& \mathbb{E}\bigg{[}\log p(\mathbf{y}^{(t-t_0+1)}|\mathbf{s}^{(t-t_0+1)} \odot \mathbf{r}^{(i)}) \prod_{n=1}^{N}  \nu_{s_n^{(t-t_0)} \rightarrow s_n^{(t-t_0+1)}} (s_n^{(t-t_0+1)}) \nu_{r_n^{(t-t_0)} \rightarrow r_n^{(t-t_0+1)}}(r_n^{(t-t_0+1)}) \notag \\ & \prod_{i=t-t_0+2}^{t}
    p(\mathbf{y}^{(i)}|\mathbf{s}^{(i)} \otimes \mathbf{r}^{(i)}) 
    \prod_{n=1}^{N} p(s_n^{(i)}|s_n^{(i-1)}) p(r_n^{(i)}|r_n^{(i-1)})  
    \bigg{|}\{\mathbf{y}^{(j)}\}_{j=t-t_0+1}^t\bigg{]}
    \label{EM},
\end{align}
where $\odot$ is the Hadamard product, and the expectation is taken over the hidden variables $\{ \mathbf{s}^{(i)} \}_{i=t-t_0+1}^t $ and $\{ \mathbf{r}^{(i)} \}_{i=t-t_0+1}^t$ given $\{\mathbf{y}^{(j)}\}_{j=t-t_0+1}^t$. We determine each parameter in $\Theta$ by maximizing \eqref{EM}. Due to space limitation, we only consider the case of $p_{01}$ for instance.\\
\indent From \eqref{support}, differentiating \eqref{EM} with respect to $p_{01}$, we obtain
\begin{align}
    \frac{\partial Q}{\partial p_{01}} =
    \sum_{i=t-t_0+2}^{t}
    \sum_{n=1}^{N} {} & \mathbb{E} \left[ \frac{\partial p(s_n^{(i)}|s_n^{(i-1)})} {\partial p_{01}}  
    \bigg{|}\{\mathbf{y}^{(j)}\}_{j=t-t_0+1}^t \right] \\
    = \sum_{i=t-t_0+2}^{t}
    \sum_{n=1}^{N} {} & \mathbb{E} \left[ \frac{1}{p_{01}}(1-s_n^{(i)})s_n^{(i-1)}-\frac{1}{1-p_{01}}s_n^{(i)}s_n^{(i-1)}
    \bigg{|}\{\mathbf{y}^{(j)}\}_{j=t-t_0+1}^t  \right] \\ \notag
    = \sum_{i=t-t_0+2}^{t}
    \sum_{n=1}^{N} {} & \frac{1}{p_{01}} \left( \mathbb{E}\left[s_n^{(i-1)} | \{\mathbf{y}^{(j)}\}_{j=t-t_0+1}^t \right]-\mathbb{E}\left[s_n^{(i)}s_n^{(i-1)} | \{\mathbf{y}^{(j)}\}_{j=t-t_0+1}^t\right] \right) \\ 
     & -\frac{1}{1-p_{01}}\mathbb{E}\left[s_n^{(i)}s_n^{(i-1)}| \{\mathbf{y}^{(j)}\}_{j=t-t_0+1}^t \right]. \label{p_{01}}
\end{align}
Setting \eqref{p_{01}} equal to zero and solving the equation yield the EM update as
\begin{equation}
    p_{01} = 1- \frac{\sum_{i=t-t_0+2}^{t}
    \sum_{n=1}^{N} \mathbb{E}\left[s_n^{(i)}s_n^{(i-1)} | \{\mathbf{y}^{(j)}\}_{j=t-t_0+1}^t\right]}
    {\sum_{i=t-t_0+2}^{t}
    \sum_{n=1}^{N} \mathbb{E}\left[s_n^{(i-1)} | \{\mathbf{y}^{(j)}\}_{j=t-t_0+1}^t \right]}. \label{update}
\end{equation}
We now focus on approximating the posterior expectation in \eqref{update}.
Note that in Section \MakeUppercase{\romannumeral3}, we have already calculated the messages that propagate in the forward direction; see \eqref{delta-s} and \eqref{across1}. To determine the posterior expectation in \eqref{update}, the \emph{backward} message from $p(s_n^{(i)}|s_n^{(i-1)})$ to $s_n^{(i-1)}$ denoted by $\nu_{s_n^{(i-1)}\leftarrow s_n^{(i)}}(s_n^{(i-1)})$ is desired as
\begin{align}
    \nu_{s_n^{(i-1)}\leftarrow s_n^{(i)}}(s_n^{(i-1)}) = 
    \lambda_{s_n^{(i-1)}\leftarrow s_n^{(i)}} \delta(s_n^{(i-1)}-1) +
    (1-\lambda_{s_n^{(i-1)}\leftarrow s_n^{(i)}}) \delta(s_n^{(i-1)}) \label{back_s},
\end{align}
with
\begin{multline}
\lambda_{s_n^{(i-1)}\leftarrow s_n^{(i)}} =\\
   \!\!\!\!\begin{cases}
      \frac{p_{01} 
    (1- \lambda_{\delta_n^{(i)}\rightarrow s_{n}^{(i)}}) (1 - \lambda_{s_n^{(i)}\leftarrow s_n^{(i+1)}} ) + (1-p_{01})\lambda_{\delta_n^{(i)}\rightarrow s_{n}^{(i)}} \lambda_{s_n^{(i)}\leftarrow s_n^{(i+1)}}}
    {(1-p_{01}+p_{01}) (1- \lambda_{\delta_n^{(i)}\rightarrow s_{n}^{(i)}}) (1 - \lambda_{s_n^{(i)}\leftarrow s_n^{(i+1)}} )+ (1-p_{01}+p_{10})\lambda_{\delta_n^{(i)}\rightarrow s_{n}^{(i)}} \lambda_{s_n^{(i)}\leftarrow s_n^{(i+1)}}},\!\! & \!\!\text{if } t-t_0+2 \!\leq \!i\!\leq \!t-1,\\
     \lambda_{\delta_n^{(t)}\rightarrow s_{n}^{(t)}},  \!\!& \!\!\text{if } \!i\!=\!t.\\ 
    \end{cases}
\end{multline}
Combining \eqref{support}, \eqref{delta-s} \eqref{across1} and \eqref{back_s}, we approximate the marginal posterior PDFs as
about $s_n^{t}$ and $s_n^{(t-1)}$ as
\begin{align}\notag
    p(s_n^{(i-1)},s_n^{(i)}|\{\mathbf{y}^{(j)}\}_{j=t-t_0+1}^t ) \approx  {}  &
    \nu_{s_n^{(i)}\leftarrow s_n^{(i+1)}}(s_n^{(i)})
    \nu_{\delta_n^{(i)}\rightarrow s_n^{(i)}}(s_n^{(i)}) \\ \label{p(ss)}
     & \times 
    \nu_{s_n^{(i-2)}\rightarrow s_n^{(i-1)}}(s_n^{(i-1)})
    \nu_{\delta_n^{(i-1)}\rightarrow s_n^{(i-1)}}(s_n^{(i-1)})
    p(s_n^{(i)}|s_n^{(i-1)}) ,\\ \label{p(s)}
    p(s_n^{(i)}|\{\mathbf{y}^{(j)}\}_{j=t-t_0+1}^t )\approx {} &
    \nu_{s_n^{(i)}\leftarrow s_n^{(i+1)}}(s_n^{(i)})
    \nu_{\delta_n^{(i)}\rightarrow s_n^{(i)}}(s_n^{(i)}) 
    \nu_{s_n^{(i-1)}\rightarrow s_n^{(i)}}(s_n^{(i)}).
\end{align}
Then, the posterior expectations in \eqref{update} are with respect to \eqref{p(ss)} and \eqref{p(s)}.
The EM updates for other chain parameters can be derived in a similar way, with the results listed in Algorithm 2. 

\begin{algorithm}[!t] 
\caption{Real-time Parameter Decision for TSA-GA Algorithm} 
\label{alg2} 
\begin{algorithmic}[1] 
\STATE {\% Invoked by Algorithm 1} \\
\IF{$t>t_0$}
\FOR{$i=t, t-1, ..., t-t_0+2$ \%}
\STATE {Compute the message from $s_n^{(i)}$ to $s_n^{(i-1)}$ according to (65)}
\STATE {Compute the message from $r_n^{(i)}$ to $r_n^{(i-1)}$ according to (67)(68)}
\ENDFOR

\STATE {\% Determine the chain parameters for the forward passing next round}
\STATE {$\lambda= \frac{1}{N} \frac{1}{t_0}
                        \sum_{i=t-t_0+1}^t\sum_{n=1}^N \mathbb{E}[s_n^{(i)}|\{\mathbf{y}^{(j)}\}_{j=t-t_0+1}^t]$}
\STATE {$p_{01}= 1- \frac{\sum_{i=t-t_0+2}^{t}
    \sum_{n=1}^{N} \mathbb{E}\left[s_n^{(i)}s_n^{(i-1)} | \{\mathbf{y}^{(j)}\}_{j=t-t_0+1}^t\right]}
    {\sum_{i=t-t_0+2}^{t}
    \sum_{n=1}^{N} \mathbb{E}\left[s_n^{(i-1)} | \{\mathbf{y}^{(j)}\}_{j=t-t_0+1}^t \right]}$}
\STATE {$p_{10}=\frac{\lambda             p_{01}}{1-\lambda}$} 
\STATE {$\beta=\frac{1}{N(t_0-1)}(b+\sqrt{b^2+8N(t_0-1)c} )$ where\\
$b \triangleq \frac{2}{\gamma} \sum_{i=t-t_0+2}^{t}
    \sum_{n=1}^{N} \mathbb{E}[r_n^{(i)} r_n^{(i-1)}|\{\mathbf{y}^{(j)}\}_{j=t-t_0+1}^t]   -   \mathbb{V}[r_n^{(i-1)}|\{\mathbf{y}^{(j)}\}_{j=t-t_0+1}^t] $\\
      $\qquad-   (\mathbb{E}[r_n^{(i-1)}|\{\mathbf{y}^{(j)}\}_{j=t-t_0+1}^t])^2$\\
$c\triangleq \frac{2}{\gamma} \sum_{i=t-t_0+2}^{t}
    \sum_{n=1}^{N} \mathbb{V}[r_n^{(i)}|\{\mathbf{y}^{(j)}\}_{j=t-t_0+1}^t]
    +(\mathbb{E}[r_n^{(i)}|\{\mathbf{y}^{(j)}\}_{j=t-t_0+1}^t])^2$\\
    $\qquad +\mathbb{V}[r_n^{(i-1)}|\{\mathbf{y}^{(j)}\}_{j=t-t_0+1}^t]+(\mathbb{E}[r_n^{(i-1)}|\{\mathbf{y}^{(j)}\}_{j=t-t_0+1}^t])^2 - 2\mathbb{E}[r_n^{(i)} r_n^{(i-1)}|\{\mathbf{y}^{(j)}\}_{j=t-t_0+1}^t]$}
\STATE {$\gamma= \frac{1}{N} \frac{1}{t_0} 
                        \sum_{i=t-t_0+1}^t\sum_{n=1}^N \mathbb{V}[r_n^{(i)}|\{\mathbf{y}^{(j)}\}_{j=t-t_0+1}^t]
                 +(\mathbb{E}[r_n^{(i)}|\{\mathbf{y}^{(j)}\}_{j=t-t_0+1}^t])^2$}
\STATE {$\xi = \frac{(2-\beta) \gamma}{\beta}$}
\ENDIF
\end{algorithmic}
\end{algorithm}

\section{Experimental Results}
In this section, we conduct a series of experiments to test the performance of the proposed TSA-GA algorithm. We consider the image classification task on the MNIST dataset, which consists of 60,000 training and 10,000 test data samples. A single layer neural network with $d=7,850$ parameters is trained with a fixed learning rate $\eta_t=0.01$. In the simulations, the FEEL performance is evaluated by the test accuracy defined as the ratio of the test sample correctly classified to the whole test dataset. By default, $K=25$ devices participate in the FEEL task, each with $K_m=1000$ local data samples drawn from 50,000 MNIST training samples in an IID manner. At each communication round, the devices compute the local gradient using all the $K_m$ data samples. We set the channel noise power $\sigma_e^2=1$.\\
\indent 
We describe the initialization of the chain parameters as follows. The initial prior sparsity $\lambda$ is set to $k/N$, which is exactly the case when the local gradients share the same sparsity pattern. The support transition probability $p_{01}$ and the amplitude forgetting coefficient $\alpha$ are preferred to be small so that the aggregation estimator is obliged to make full use of the temporal structure to assist the recovery. Empirically, setting initial $p_{01} = 0.005$ and $\alpha = 0.005$ promotes the acquirement of the historical knowledge and accelerates the  training at the early stages. The initial variance $\gamma$ is estimated from the observation $\mathbf{y}$ via $\gamma = \frac{\Vert{\mathbf{y}^{(1)}}\Vert_2^2}{N\lambda}$. The remaining chain parameters $p_{10}$ and $\xi$ are initialized according to Line 10 and Line 13 of Algorithm 2. We set $\epsilon=10^{-7}$. In Algorithm 1, the EM update is invoked after the first 10 communication rounds. This is because the estimation of $\mathbf{x}^{(t)}$ at the initial stages is not accurate enough for the EM procedure to provide a reliable parameter update. After the first 10 rounds, we find that a window length of $t_{0}=5$ works well for Algorithm 2
to track the chain parameters throughout the training. We set $I=25$ to ensure convergence.\\
\indent We employ the following benchmarks for performance comparison. 
\begin{enumerate}
    \item Error-free channel: Suppose that the transmission error, including the error caused by compression/decompression and channel noise, is free. Thus, at the $t$-th communication round, the global model is updated via $    \boldsymbol{\theta}^{(t+1)}=\boldsymbol{\theta}^{(t)}-\eta^{(t)} \frac{1}{K} \sum_{m=1}^M K_m {\mathbf{g}_m^{sp^{(t)}}}$.
    \item TSA-GA without support correlation: This case adopts the proposed TSA-GA algorithm by removing the Markov chain of the support. That is, $p(s_n^{(j)}|s_n^{(j-1)})$ in \eqref{post} is replaced with $p(s_n^{(j)})=\lambda\delta(s_n^{(j)}-1)+(1-\lambda)\delta(s_n^{(j)}).$
    \item TSA-GA without support correlation: Similarly, $p(r_n^{(j)}|r_n^{(j-1)})$ in \eqref{post} is replaced with $\mathcal{N}\left(r_n^{(j)};0,\gamma\right)$. Thus only amplitude correlation is incorporated. 
    
    \item A-DSGD: Aggregating without correlation awareness \cite{amiri2020machine},
    where the AMP algorithm\cite{donoho2009message} is adopted for the reconstruction of $\mathbf{x}^{(t)}$ frame by frame. A pseudo-random IID Gaussian matrix with normalized columns is used for each $\mathbf{A}^{(t)}$.
\end{enumerate}

\indent In Fig. 5, we plot the test accuracy versus the communication round for $\bar{P}=500, s/N=0.1$, $k/N=0.05$ for A-DSGD and $k/N=0.2$ otherwise. 
Both the scenarios with and without multiple steps of gradient descent are investigated. In both scenarios, the proposed training approach outperforms other counterparts in terms of convergence speed. Compared with that in Fig. 5(a), the convergence in Fig. 5(b) is faster due to the use of multiple steps of gradient descent in Fig. 5(b). In Fig. 5, the approach \cite{amiri2020machine} that ignores the historical information performs the worst, while the test accuracy of our TSA-GA algorithm approaches that of the ideal error-free scenario. This substantial performance gain is attributed to the reconstruction refinement by taking temporal information into account. Moreover, partial exploitation of the temporal structure of the gradient aggregation (i.e. w/o support or amplitude transition) degrades the performance of TSA-GA. 

\begin{figure*}[!t]
\centering
\subfigure[$E=1$]{
\begin{minipage}[t]{0.5\textwidth}
\centering
\includegraphics[scale=0.55]{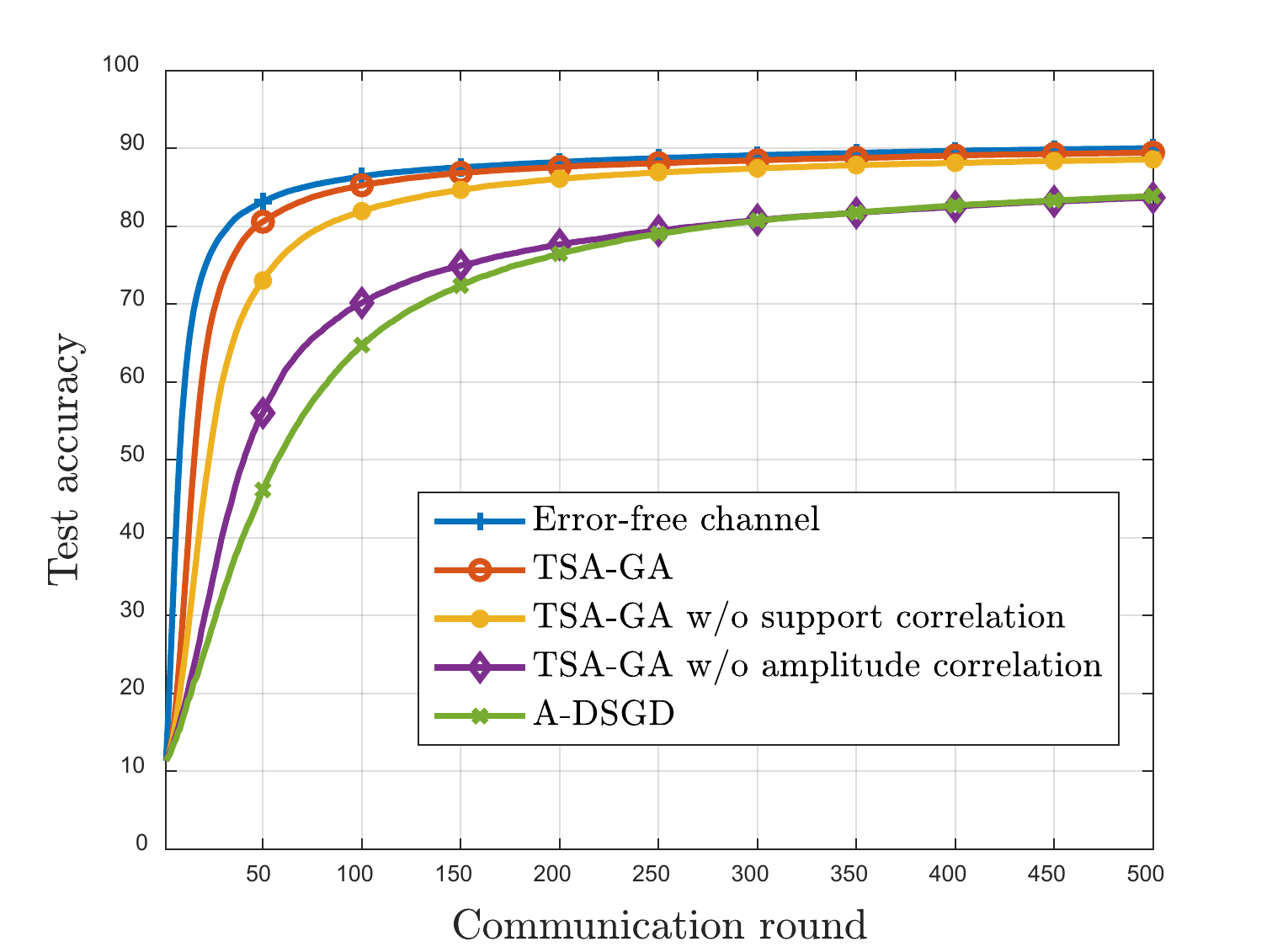}
\end{minipage}%
}%
\subfigure[$E=5$]{
\begin{minipage}[t]{0.5\textwidth}
\centering
\includegraphics[scale=0.55]{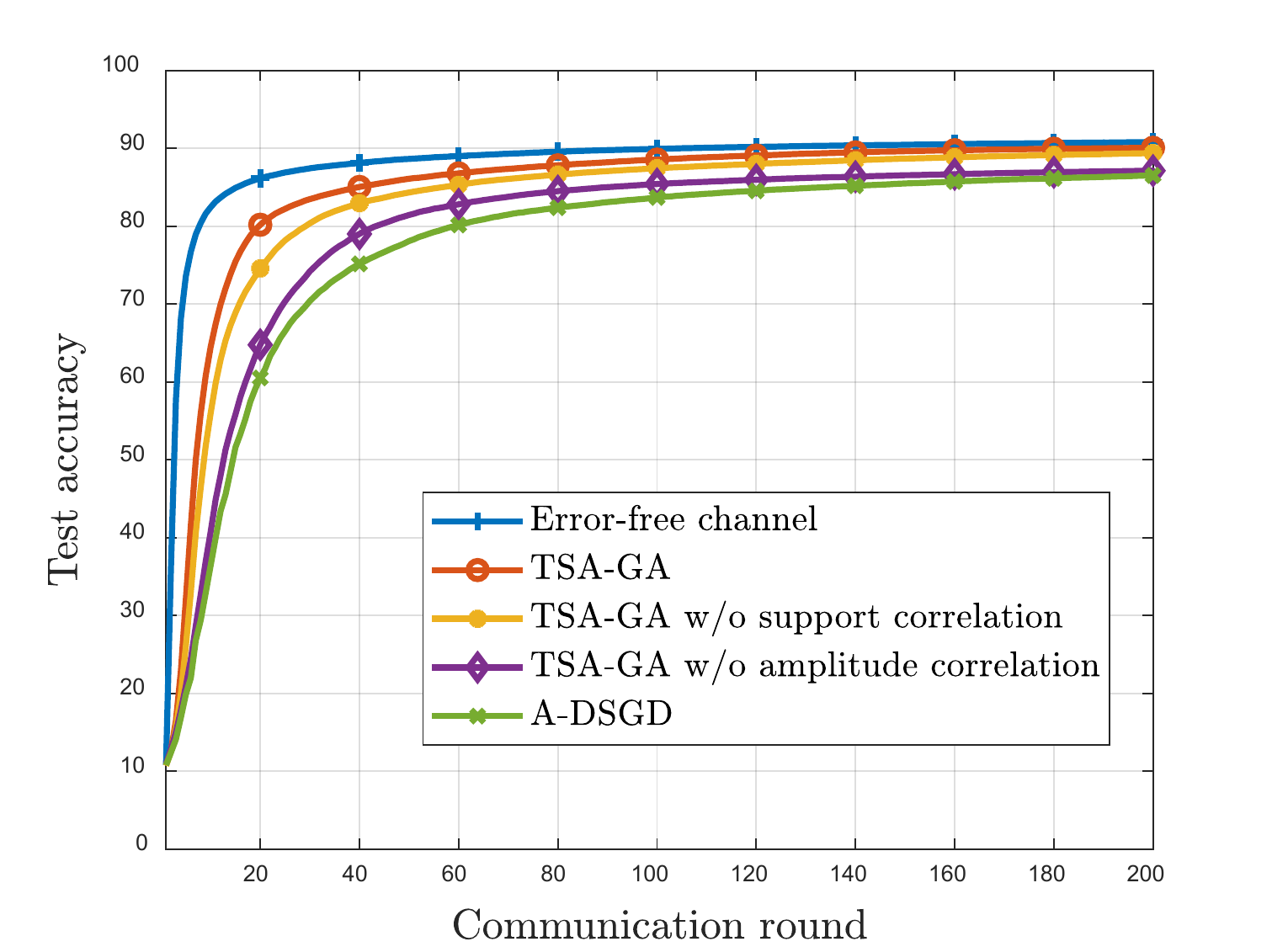}
\end{minipage}%
}%
\centering
\caption{Test accuracy of different algorithms versus communication round, $s/N=0.1$, $\bar{P}=500$, $M=25$, $K_m=1000$, $k/N=0.05$ for A-DSGD and $k/N=0.2$ for others. }
\end{figure*}

\indent In Fig. 6, we investigate the test accuracy for different compression ratios against the communication round and the total number of symbols, respectively. 
Again, both Fig. 6(a) and (b) show that the proposed algorithm achieves considerable performance improvement over the baseline A-DSGD in term of both the convergence rate and final accuracy under various ratios of compression. In addition, it is remarkable that the TSA-GA algorithm works well even when 25 times of compression is used, while the training performance of A-DSGD deteriorates sharply as the compression becomes more aggressive. 
Note that in Fig. 6(b), the convergence rate of TSA-GA accelerates as the $s/N$ decreases from $0.5$ to $0.1$ and maintains almost the same from $0.1$ to $0.04$. The reason is that for $s/N<0.1$, more aggressive compression leads to much more severe degradation of the reconstruction performance at the PS and therefore more rounds of communication are required to achieve the same accuracy.

\begin{figure*}[!t]
\subfigure[Test accuracy versus communication round]{
\begin{minipage}[t]{0.50\textwidth}
\centering
\includegraphics[scale=0.55]{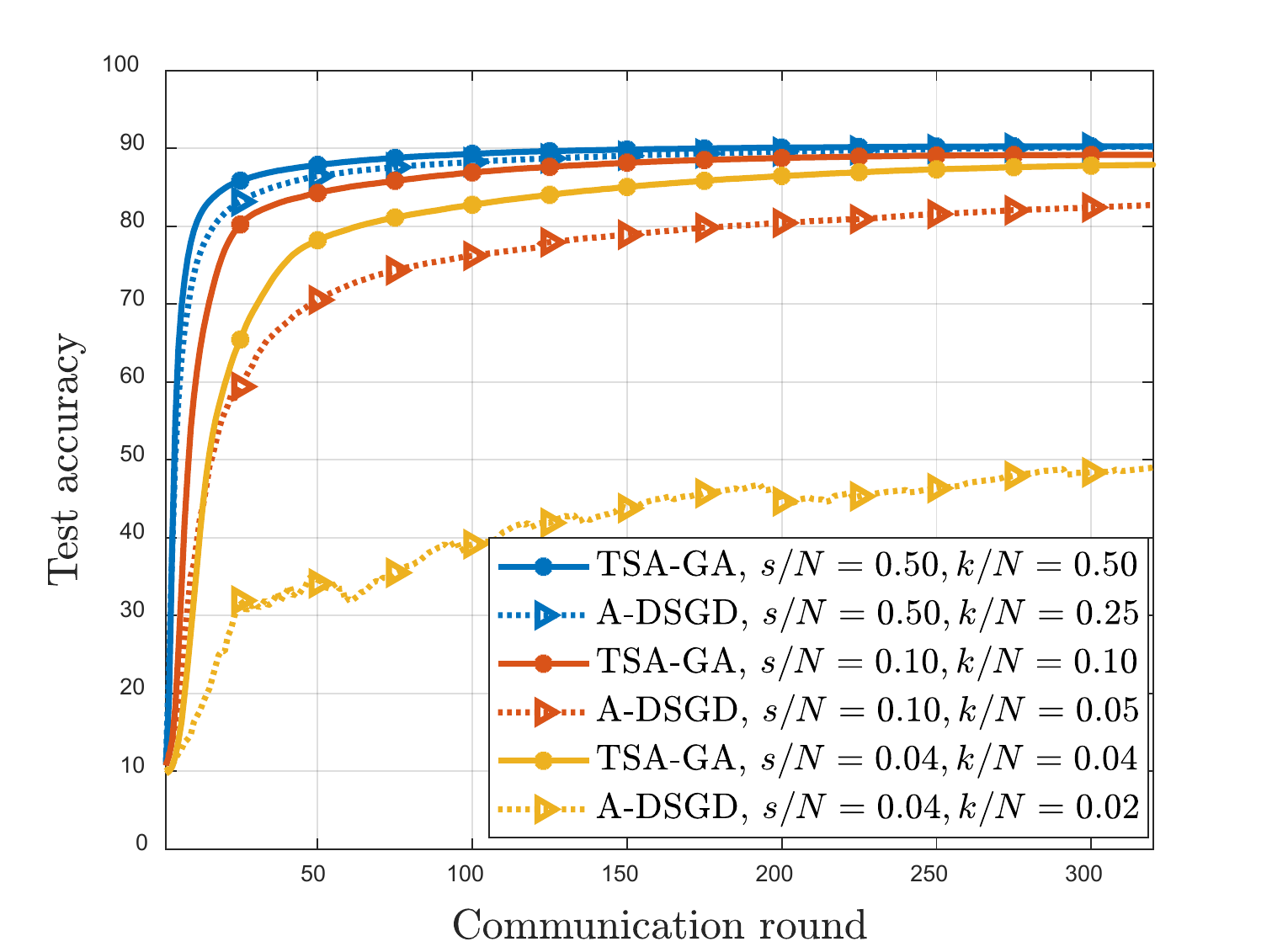}
\end{minipage}%
}%
\subfigure[Test accuracy versus total number of symbols]{
\begin{minipage}[t]{0.50\textwidth}
\centering
\includegraphics[scale=0.55]{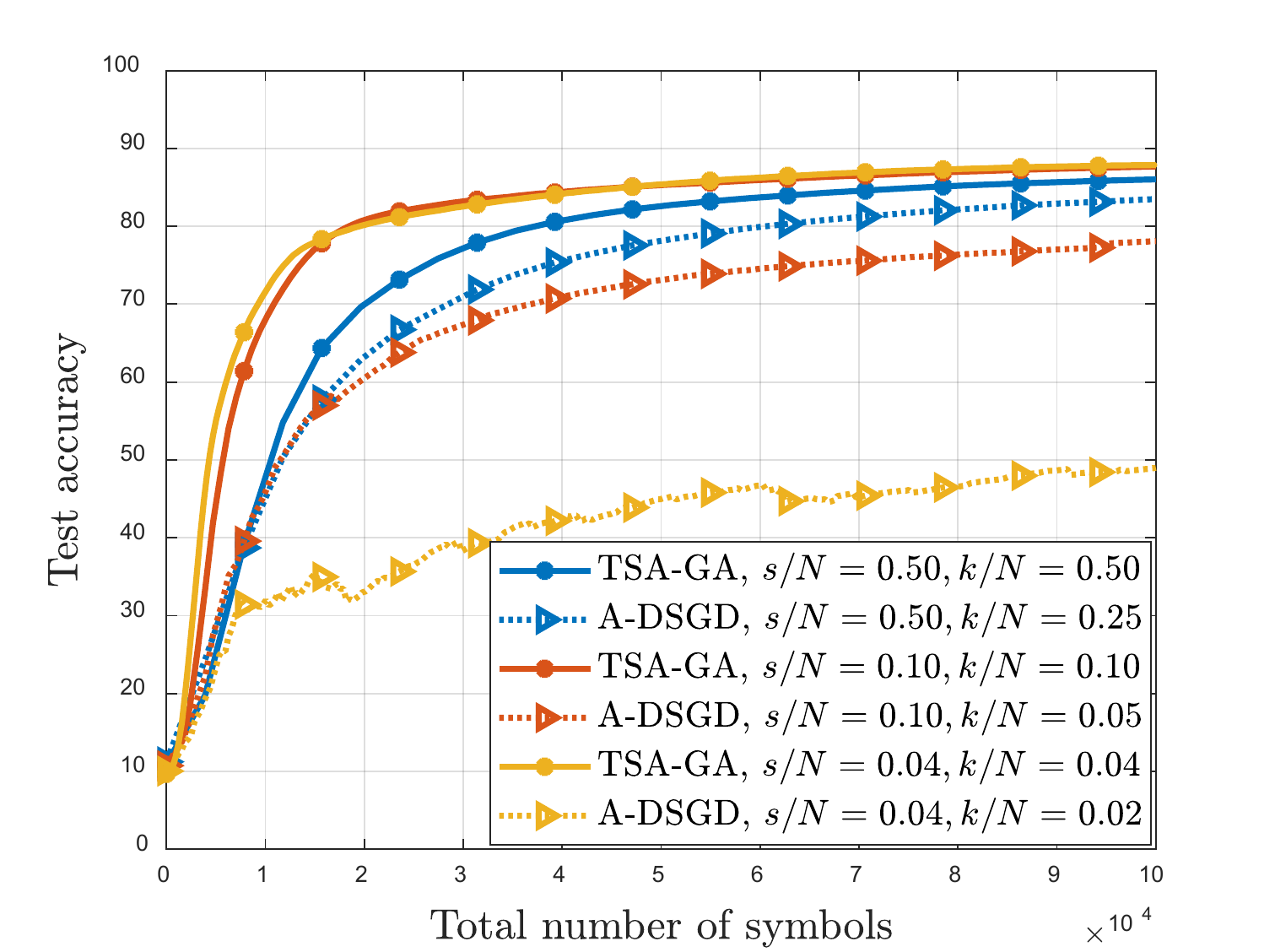}
\end{minipage}%
}%
\centering
\caption{Test accuracy for different compression ratios, $M =25$, $K_m=1000$, $E=5$, $k=s/2$ for A-DSGD and $k=s$ otherwise.}
\end{figure*}

\indent In Fig. 7, we investigate the impact of transmit power $\bar{P}$ on the performance with and without exploiting the temporal structure. We again observe substantial performance improvement by exploiting the inherent temporal structure of the gradient aggregation series. We highlight that as $\bar{P}$ decreases, the performance degradation of the TSA-GA scheme is much slower compared with that of the A-DSGD. This indicates that the proposed scheme is more robust to the channel noise compared with the A-DSGD.  

\indent Finally, we consider the FEEL performance with non-IID data distribution among the devices in Fig. 8, where each device is constrained to select data samples from only $\chi$ classes.
For each device $m$, the local data are selected as follows. Determine randomly which $\chi$ classes the local data come from at first, and then choose data samples uniformly within the given $\chi$ classes. 
From Fig. 8, we observe that under this non-IID data distribution, the FEEL performance degrades compared with the IID case. Decreasing $\chi$ from 5 to 2 further worsens the learning performance as expected. Besides, Fig. 9 demonstrates the superiority of TSA-GA again by noticing that the gap towards the error-free one is much smaller than that of A-DSGD. 



\begin{figure*}[!t]
\begin{minipage}[t]{0.48\textwidth}
\centering
\includegraphics[scale=0.55]{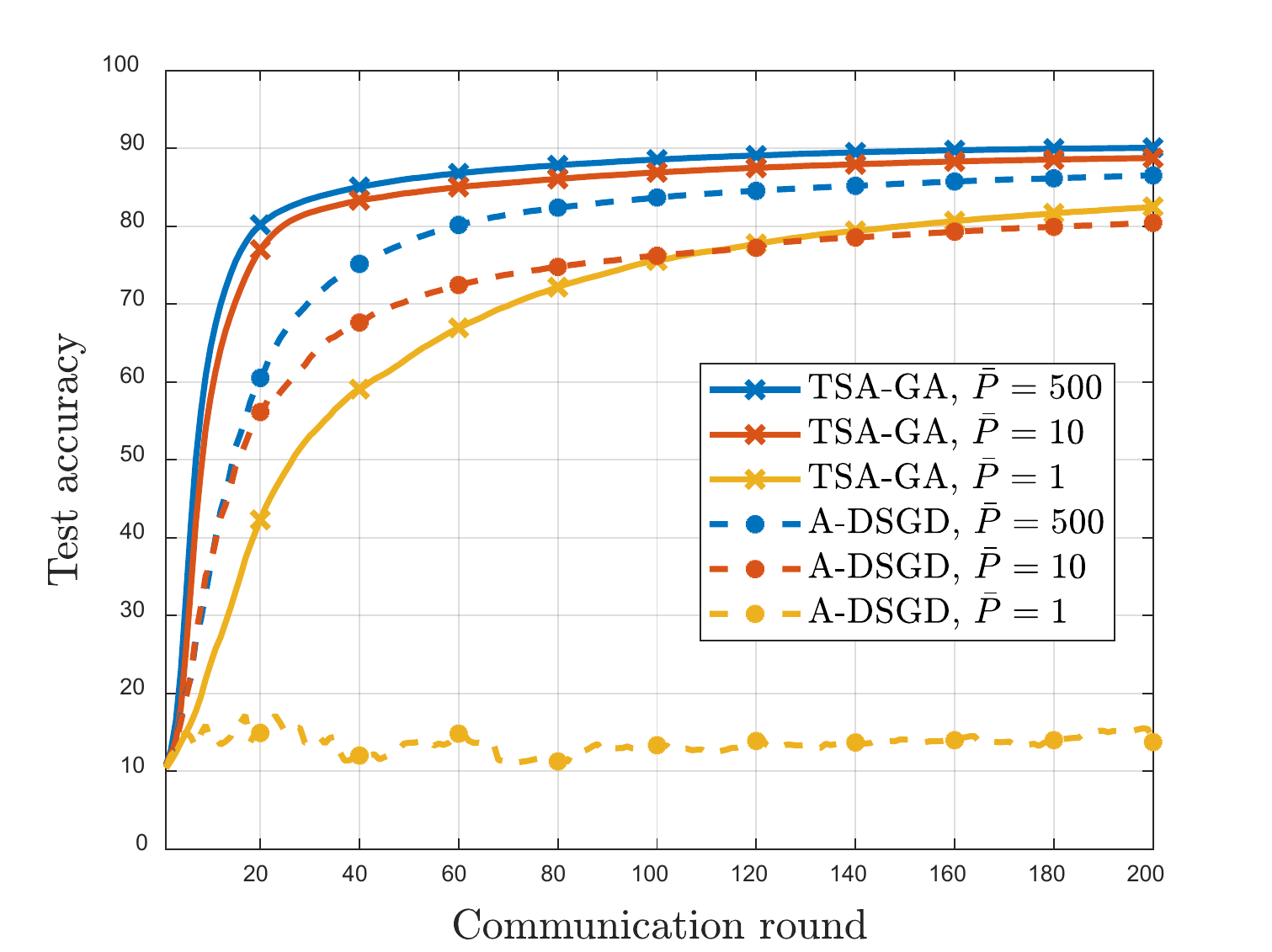}
\caption{Test accuracy versus communication round for different power. $s/N=0.1$, $M =25$, $K_m=1000$, $E=5$, $k=s/2$ for A-DSGD and $k=s$ otherwise.}
\end{minipage}%
\centering
\quad
\begin{minipage}[t]{0.48\textwidth}
\centering
\includegraphics[scale=0.55]{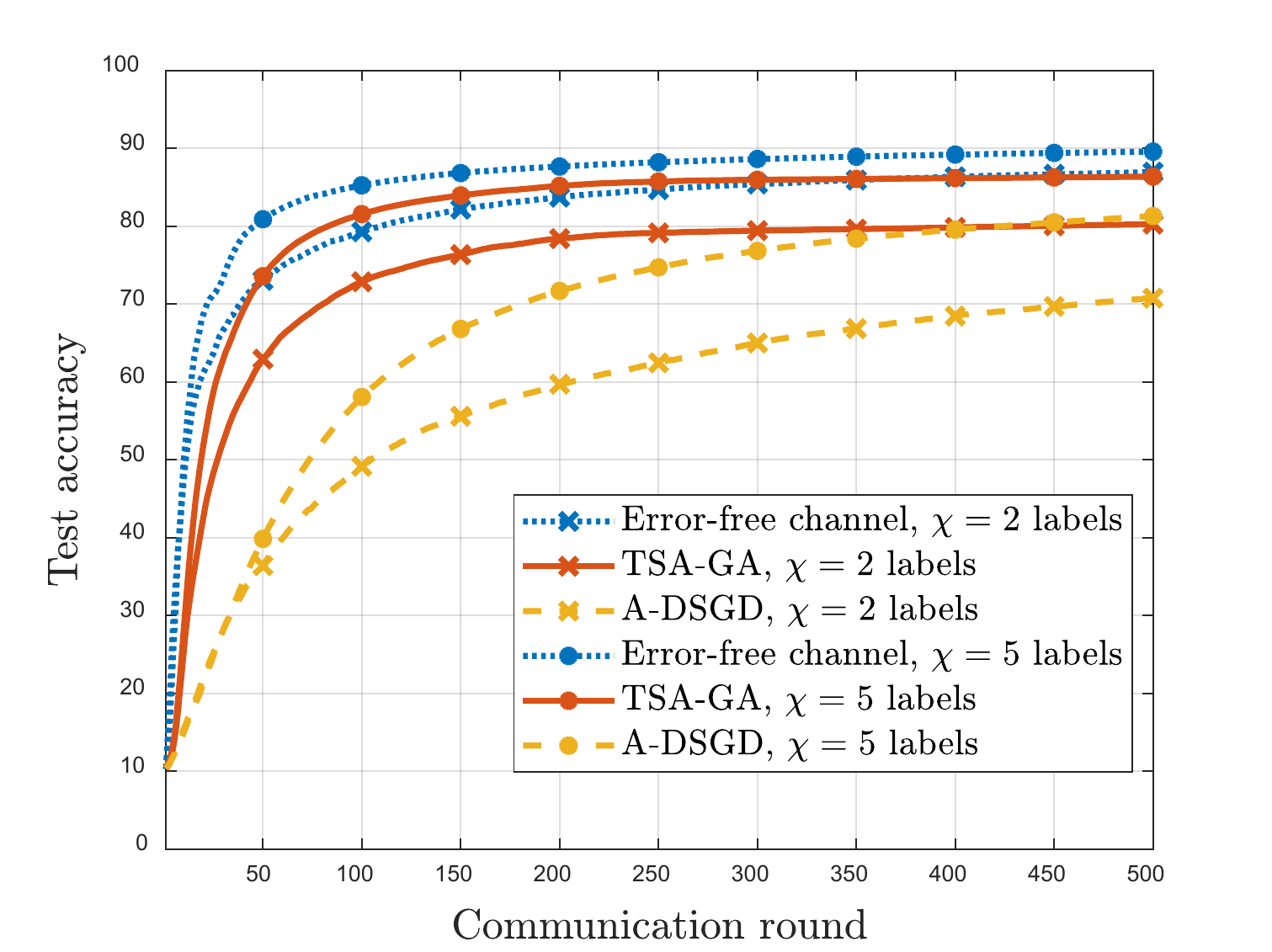}
\caption{Test accuracy versus communication round with non-IID data distribution. $s/N=0.1$, $k=s/2$, $\bar{P}=500$, $M=25$, $K_m=1000$, $E=1$.}
\end{minipage}%
\centering
\end{figure*}

\section{Conclusion}
In this paper, we studied over-the-air model aggregation in the FEEL system. We introduced a Markovian probability model to characterize the temporal structure of the gradient aggregation series. Based on the Markovian model, we developed a turbo message passing algorithm, termed TSA-GA, to efficiently recover the desired gradient aggregation from all the past noisy observations at the PS. We further established the SE analysis to characterize the behaviour of the proposed TSA-GA algorithm. Based on the SE analysis, we established an explicit bound of the expected loss reduction under certain standard regularity conditions. For practical implementation, we developed an EM strategy to determine the unknown parameters in the Markovian model. We showed that the proposed TSA-GA scheme significantly outperforms its counterpart schemes, and that much more aggressive compression of local updates can be achieved by taking the gradient temporal structure into account.

\appendices

\section{Some Useful Lemmas}
 \newtheorem{mylemma}{Lemma}




\begin{mylemma} 
Suppose that $\{\mathbf{x}^{(t)}\}_{t=1}^T$ share the same probability model as $\{\mathbf{x}^{(t)}\}$ in \eqref{model_SE}. 

With $ 0 \leq \tau^{(t)} \leq \tilde{\tau}^{t},0 \leq \tau^{(t-1)} \leq \tilde{\tau}^{(t-1)},...,0 \leq \tau^{(1)} \leq \tilde{\tau}^{(1)}$,
we have
\begin{multline}
         mmse(x^{(t)}|x^{(t)}+\sqrt{\tau^{(t)}}w^{(t)};x^{(t-1)}+\sqrt{\tau^{(t-1)}}w^{(t-1)},...,x^{(1)}+\sqrt{\tau^{(1)}}w^{(1)}) \\
        \leq mmse(x^{(t)}|x^{(t)}+\sqrt{\tilde{\tau}^{(t)}}w^{(t)};x^{(t-1)}+\sqrt{\tilde{\tau}^{(t-1)}}w^{(t-1)},...,x^{(1)}+\sqrt{\tilde{\tau}^{(1)}}w^{(1)}). \label{lemma3}
\end{multline}
\end{mylemma}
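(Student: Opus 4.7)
The plan is to prove the claim by monotonicity in one coordinate at a time, then chain the single‑coordinate inequalities. Concretely, I would write the ordered tuple of noise variances as
\[
\boldsymbol{\tau} = (\tau^{(t)}, \tau^{(t-1)}, \ldots, \tau^{(1)}), \qquad \tilde{\boldsymbol{\tau}} = (\tilde{\tau}^{(t)}, \tilde{\tau}^{(t-1)}, \ldots, \tilde{\tau}^{(1)}),
\]
and interpolate between them by flipping one coordinate at a time. If each single‑coordinate swap can only increase the MMSE, then \eqref{lemma3} follows by transitivity. The substantive content is therefore the one‑coordinate monotonicity statement: holding all other noise variances fixed, enlarging a single $\tau^{(t')}$ to $\tilde{\tau}^{(t')}$ cannot decrease the MMSE of $x^{(t)}$.

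To prove that single‑coordinate monotonicity, I would use the standard noise‑stacking coupling for additive Gaussian channels. Since $\tilde{\tau}^{(t')} \ge \tau^{(t')}$, I can introduce an independent standard Gaussian $\tilde{w}^{(t')}$ (independent of the signal series $\{\mathbf{x}^{(t)}\}$ and of all other $w^{(j)}$) and write, in joint distribution,
\[
x^{(t')} + \sqrt{\tilde{\tau}^{(t')}}\, w^{(t')} \;\stackrel{d}{=}\; \bigl(x^{(t')} + \sqrt{\tau^{(t')}}\, w^{(t')}\bigr) + \sqrt{\tilde{\tau}^{(t')} - \tau^{(t')}}\; \tilde{w}^{(t')}.
\]
Hence the observation at the larger noise level is obtained from the observation at the smaller noise level by an additive independent perturbation. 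Denote the full ``smaller'' observation vector by $\mathcal{Z}$ and the ``larger'' one by $\tilde{\mathcal{Z}}$, differing only in coordinate $t'$. Because $\tilde{\mathcal{Z}}$ is a measurable function of $(\mathcal{Z}, \tilde{w}^{(t')})$ with $\tilde{w}^{(t')}$ independent of $x^{(t)}$ and $\mathcal{Z}$, the tower property gives
\[
\mathbb{E}[x^{(t)} \mid \tilde{\mathcal{Z}}] = \mathbb{E}\bigl[\,\mathbb{E}[x^{(t)} \mid \mathcal{Z}, \tilde{w}^{(t')}] \,\big|\, \tilde{\mathcal{Z}}\bigr] = \mathbb{E}\bigl[\,\mathbb{E}[x^{(t)} \mid \mathcal{Z}] \,\big|\, \tilde{\mathcal{Z}}\bigr],
\]
so by Jensen's inequality applied to the squared loss (i.e., the data processing inequality for MMSE),
\[
\mathrm{mmse}(x^{(t)} \mid \tilde{\mathcal{Z}}) \;\ge\; \mathrm{mmse}(x^{(t)} \mid \mathcal{Z}).
\]
This is precisely the one‑coordinate monotonicity, and iterating over $t' = 1, \ldots, t$ yields \eqref{lemma3}.

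The Markovian structure of $\{\mathbf{x}^{(t)}\}_{t=1}^T$ imposed in \eqref{model_SE} plays no direct role in the argument: it only specifies the joint prior on the signal, while the monotonicity comes entirely from the Gaussian observation model, the independence of $\{w^{(j)}\}$ across $j$, and the independence of each $\tilde{w}^{(t')}$ from the signal. The main obstacle I anticipate is purely notational — keeping track of which observations are being modified while the signal prior stays fixed — but conceptually the lemma reduces to the classical fact (Guo--Shamai--Verd\'u) that the MMSE of a signal observed through parallel additive Gaussian channels is non‑decreasing in each channel's noise variance, which holds whatever the joint prior on the signal may be.
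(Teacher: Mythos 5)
Your proposal is correct and follows essentially the same route as the paper's proof: both reduce the claim to one-coordinate monotonicity and establish it by the Gaussian noise-stacking coupling, with the larger-noise observation realized as an independent degradation of the smaller-noise one (the paper phrases the final step via adding and then dropping the extra observation $\tilde{w}^{(t')}$ using conditional independence, which is equivalent to your tower-property/Jensen formulation). No gaps.
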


\begin{proof}
Since a Gaussian random variable is infiniely divisible, we construct a cascaded AWGN observation for $ {x}^{(1)}$ as follows
\begin{align}
 {z}^{(1)} & =  {x}^{(1)} + \sqrt{\tau^{(1)}} {w}^{(1)} \label{o1}, \\
 {\tilde{z}}^{(1)} & =  {z}^{(1)} + \sqrt{\tilde{\tau}^{(1)}-\tau^{(1)}} {\tilde{w}}^{(1)} \label{o2},
\end{align}
 where $ {x}^{(1)}$,$ {w}^{(1)} \sim \mathcal{N}(0,1)$, $ {\tilde{w}}^{(1)} \sim \mathcal{N}(0,1)$ are independent. Based on this, we obtain
 \begin{align} \notag
    & mmse( {x}^{(t)}| {x}^{(t)}+\sqrt{\tau^{(t)}} {w}^{(t)}; {x}^{(t-1)}+\sqrt{\tau^{(t-1)}} {w}^{(t-1)},..., {x}^{(1)}+\sqrt{\tilde{\tau}^{(1)}} {w}^{(1)}) \\ \notag
 \geq {} & mmse( {x}^{(t)}| {x}^{(t)}+\sqrt{\tau^{(t)}} {w}^{(t)}; {x}^{(t-1)}+\sqrt{\tau^{(t-1)}} {w}^{(t-1)},..., {x}^{(1)}+\sqrt{\tilde{\tau}^{(1)}} {w}^{(1)}, \sqrt{\tilde{\tau}^{(1)}-\tau^{(1)}} {\tilde{w}}^{(1)} ) \\ \notag
 = {} & mmse( {x}^{(t)}| {x}^{(t)}+\sqrt{\tau^{(t)}} {w}^{(t)}; {x}^{(t-1)}+\sqrt{\tau^{(t-1)}} {w}^{(t-1)},..., {x}^{(1)}+\sqrt{\tau}^{(1)} {w}^{(1)}, \sqrt{\tilde{\tau}^{(1)}-\tau^{(1)}} {\tilde{w}}^{(1)} )\\
 = {} & mmse( {x}^{(t)}| {x}^{(t)}+\sqrt{\tau^{(t)}} {w}^{(t)}; {x}^{(t-1)}+\sqrt{\tau^{(t-1)}} {w}^{(t-1)},..., {x}^{(1)}+\sqrt{\tau^{(1)}} {w}^{(1)}),
\end{align}
where the first step follows by adding extra condition $\sqrt{\tilde{\tau}^{(1)}-\tau^{(1)}} {\tilde{w}}^{(1)}$, the second step follows from the fact that knowing $ {\tilde{z}}^{(1)}$ and $\sqrt{\tilde{\tau}^{(1)}-\tau^{(1)}} {\tilde{w}}^{(1)}$ is equivalen to knowing $ {z}^{(1)}$ and $\sqrt{\tilde{\tau}^{(1)}-\tau^{(1)}} {\tilde{w}}^{(1)}$ due to the cascaded observation models \eqref{o1} and \eqref{o2}, and the third step follows from the conditional independence of $ {x}^{(t)}$ and $\sqrt{\tilde{\tau}^{(1)}-\tau^{(1)}} {\tilde{w}}^{(1)}$ given $ {z}^{(1)}$. Finally, we justify \eqref{lemma3} by similar constructions in \eqref{o1} and \eqref{o2} for other $x^{(t')}$, $t'=2,...,t$.
\end{proof}

\begin{mylemma} The transfer functions $f(\cdot)$ and $g_t(\cdot)$
in \eqref{f} and \eqref{g} are monotonically increasing.
\end{mylemma}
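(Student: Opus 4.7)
Monotonicity of $f$ is immediate: since $s<N$ under compression, $f(v)=(N/s-1)v+\sigma^{2}$ is affine with positive slope and is strictly increasing on $[0,\infty)$.

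For $g_t$, my plan is to work with the reciprocal $\psi_t(\tau)\triangleq 1/g_t(\tau)=1/\phi_t(\tau)-1/\tau$. First I would observe that $\psi_t(\tau)>0$, because the trivial estimator $\hat{x}^{(t)}=x^{(t)}+\sqrt{\tau}w^{(t)}$ already attains squared error $\tau$ and any proper effective prior on $x^{(t)}$ strictly improves on it, giving $\phi_t(\tau)<\tau$ and hence $g_t(\tau)=1/\psi_t(\tau)>0$. Monotonic increase of $g_t$ then reduces to showing $\psi_t'(\tau)\le 0$. Direct differentiation yields
\[
\psi_t'(\tau)=\frac{1}{\tau^{2}}-\frac{\phi_t'(\tau)}{\phi_t(\tau)^{2}},
\]
so the claim further reduces to the pointwise MMSE inequality $\phi_t'(\tau)\ge \phi_t(\tau)^{2}/\tau^{2}$.

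The central step is a conditional version of the Guo--Shamai--Verd\'u MMSE derivative identity. Let $\tilde Y=x^{(t)}+\sqrt{\tau}\,w^{(t)}$ and regard the past pseudo-observations $\mathcal{Z}_*^{(t),pre}$ as a fixed side condition whose associated variances $\tau_*^{(1)},\ldots,\tau_*^{(t-1)}$ do not depend on $\tau$. Then $p(x^{(t)}\mid\mathcal{Z}_*^{(t),pre})$ plays the role of a $\tau$-independent effective prior on $x^{(t)}$, and the standard GSV identity applied to the scalar channel $\tilde Y=x^{(t)}+\sqrt{\tau}\,w^{(t)}$ with this prior yields
\[
\phi_t'(\tau)=\frac{1}{\tau^{2}}\,\mathbb{E}\!\left[\mathrm{Var}^{2}\!\left(x^{(t)}\mid\tilde Y,\mathcal{Z}_*^{(t),pre}\right)\right].
\]
Since $\phi_t(\tau)=\mathbb{E}\!\left[\mathrm{Var}\!\left(x^{(t)}\mid\tilde Y,\mathcal{Z}_*^{(t),pre}\right)\right]$ by definition, setting $Z\triangleq\mathrm{Var}(x^{(t)}\mid\tilde Y,\mathcal{Z}_*^{(t),pre})$ and invoking Jensen's inequality $\mathbb{E}[Z^{2}]\ge(\mathbb{E}[Z])^{2}$ immediately delivers $\phi_t'(\tau)\ge\phi_t(\tau)^{2}/\tau^{2}$, hence $\psi_t'(\tau)\le 0$ and $g_t$ is non-decreasing in $\tau$.

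The main obstacle I anticipate is the rigorous justification of the conditional MMSE derivative identity in the online setting. My plan is to handle this via an exchange-of-limits argument: because $p(x^{(t)}\mid\mathcal{Z}_*^{(t),pre})$ is $\tau$-independent, the unconditional GSV identity applies pointwise in the realisation of $\mathcal{Z}_*^{(t),pre}$, and dominated convergence, justified by boundedness of the posterior second moment under the Bernoulli--Gaussian / first-order auto-regressive prior, permits interchanging $d/d\tau$ with the outer expectation over $\mathcal{Z}_*^{(t),pre}$. Strict monotonicity of $g_t$ then follows from the strict form of Jensen's inequality whenever $\mathrm{Var}(x^{(t)}\mid\tilde Y,\mathcal{Z}_*^{(t),pre})$ is a non-degenerate function of its arguments, which holds for the mixture prior considered here.
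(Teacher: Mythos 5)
Your proposal is correct and follows essentially the same route as the paper: the paper also notes that $f$ is trivially increasing, passes to the reciprocal $\psi_t(\tau)=1/\phi_t(\tau)-1/\tau$, invokes the Guo--Shamai--Verd\'u MMSE derivative identity (conditioned on the past pseudo-observations $\mathcal{Z}_*^{(t),pre}$) to write $\phi_t'(\tau)=\tau^{-2}\,\mathbb{E}\!\left[M_t^2\right]$ with $M_t$ the conditional variance, and concludes $\psi_t'\le 0$ from $\mathbb{E}[M_t^2]\ge(\mathbb{E}[M_t])^2$. Your additional care about interchanging differentiation with the outer expectation is a refinement the paper leaves implicit.
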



\begin{proof}
We basically follow the procedure in \cite{ma2015performance}. The monotonicity of $f(\cdot)$ is evident. As for $g_t(\cdot)$, we consider the monotonicity of $(g_t(\cdot))^{-1} \triangleq \psi_t(\cdot)$ instead. 
The derivative of $\psi_t(\tau^{(t)}_i)$ can be expressed as
\begin{align}
    \hspace{-0.25cm} \frac{d\psi_t}{d\tau^{(t)}_i} \!\!
    = \!\!-\frac{1}{\phi_t(\tau^{(t)}_i)^2} \frac{d\phi_t(\tau^{(t)}_i)}{d\tau^{(t)}_i} \!\!+\!\! \frac{1}{(\tau^{(t)}_i)^2} 
     \!\!\overset{(a)}{=} \!\!\frac{-\mathbb{E} \left[ M_t(\tau^{(t)}_i;\mathcal{Z}_*^{(t),pre})^2 \right] \!\!+\!\!  \left\{\mathbb{E} \left[M_t(\tau^{(t)}_i;\mathcal{Z}_*^{(t),pre})\right] \right\}^2 } { (\tau_i^{(t)})^2 \phi_t(\tau^{(t)}_i)^2
     } \!\!\leq \!\!0 ,
\end{align}
where $M_t(\tau;\mathcal{Z}) \triangleq \mathbb{E}\left[ \left[ x^{(t)}- \mathbb{E}(x^{(t)}|z^{(t)};\mathcal{Z})\right]^2 \big{|}z^{(t)};\mathcal{Z}\right]$, and step (a) is due to \cite[Corolary 2]{guo2011estimation}.
Therefore, $\psi_t(\cdot)$ is monotonically decreasing and thus $g_t(\cdot)$ is monotonically increasing.
\end{proof}

\section{Proof of Theorem 1}
We prove by induction. We first show that, $v^{(t)}_2 \leq v^{(t)}_1$ for any fixed $t$. 
From \eqref{phi}, we have
\begin{align}
    \phi_t(\tau^{(t)}_i) \overset{(a)}
    \leq mmse(x^{(t)}|x^{(t)}+\sqrt{\tau^{(t)}}w^{(t)})
    \overset{(b)} 
    \leq \frac{\tau^{(t)} v^{(t)}_1}{\tau^{(t)}+v^{(t)}_1},
\end{align}
where step $(a)$ follows by dropping extra condition $\mathcal{Z}_*^{(t),pre}$, and step $(b)$ follows from \cite[Proposition 15]{guo2011estimation}.
Thus, 
\begin{align}
    v^{(t)}_2  = \left( \frac{1}{\phi_t(\tau^{(t)}_1)} - \frac{1}{\tau^{(t)}_1}\right)^{-1} 
     \leq \left( \frac{\tau^{(t)}+v^{(t)}_1}{\tau^{(t)} v^{(t)}_1}- \frac{1}{\tau^{(t)}_1}\right)^{-1}
    = v^{(t)}_1.
\end{align}
Now, suppose that \eqref{Th1-1} and \eqref{Th1-2} hold for $i=1,...,I$, Then for $i=I+1$, from \eqref{f} and \eqref{g}, 
\begin{equation}
    v_{I+1}^{(t)} = g_t(\tau_I^{(t)}) = g_t(f(v_I^{(t)})) \leq g_t(f(v_{I-1}^{(t)})) = v_{I}^{(t)},
\end{equation}
where the inequality is due to the induction assuption and the monotonicity of $f(\cdot)$ and $g_t(\cdot)$ in Lemma 2. Moreover, $\tau^{(t)}_{i} \geq \sigma^2$ and $v^{(t)}_{i} \geq 0$ is evident from \eqref{f}, \eqref{g} and \cite[Proposition 15]{guo2011estimation}, which completes the proof.


\section{Proof of Theorem 2}
We prove by nested induction. The outer induction is with respect to round index $t$ and the inner induction is with respect to turbo iteration index $i$. We devide the proof into 3 parts. \\
\indent \emph{Part 1:} To begin with, we show that \eqref{Th2-1} and \eqref{Th2-2} hold for $t=2$ and $i=1$:
\begin{align}
    \phi_{t}(\tau^{(t)}_i) \notag
    & \overset{(a)}{=} mmse\left( {x}^{(t)} \bigg{|} {x}^{(t)}+\sqrt{\tau^{(t)}_i} {w}^{(t)}; \left\{{x}^{(j)}+\sqrt{\tau^{(j)}_*} {w}^{(j)} \right\}_{j=1}^{t-1} \right) \\ \notag
    & \overset{(b)}{\leq}     mmse\left( {x}^{(t)} \bigg{|} {x}^{(t)}+\sqrt{\tau^{(t)}_i} {w}^{(t)}; \left\{{x}^{(j)}+\sqrt{\tau^{(j)}_*} {w}^{(j)} \right\}_{j=2}^{t-1} \right) \\ \notag
    & \overset{(c)}{\leq}     mmse\left( {x}^{(t)} \bigg{|} {x}^{(t)}+\sqrt{\tau^{(t)}_i} {w}^{(t)}; \left\{{x}^{(j)}+\sqrt{\tau^{(j-1)}_*} {w}^{(j)} \right\}_{j=2}^{t-1} \right) \\ 
    & \overset{(d)}{\leq}     mmse\left( {x}^{(t-1)} \bigg{|} {x}^{(t-1)}+\sqrt{\tau^{(t)}_i} {w}^{(t-1)}; \left\{{x}^{(j-1)}+\sqrt{\tau^{(j-1)}_*} {w}^{(j-1)} \right\}_{j=2}^{t-1} \right)\notag \\ & = \phi_{t-1}(\tau^{(t)}_i) \label{phi_mono},
\end{align}
\noindent 
where step (a) follows from the definition in \eqref{phi}, step (b) follows by dropping the condition indexed by $j=1$, step (c) holds because $\{{x}^{(j)}+\sqrt{\tau^{(j)}_*} {w}^{(j)} \}_{j=2}^{t-1}$ is in fact empty for $t=2$, and step (d) follows from the assumption that $\{ {x}^{(t)}\}_{t=1}^{T}$ is strictly stationary. Then,
\begin{align}
    \hspace{-2mm} v_{i+1}^{(t)}\!\overset{(a)}{=} \!\left( \frac{1}{\phi_{(t)}(\tau^{(t)}_i)}\!-\!\frac{1}{\tau^{(t)}_i}\right)^{-1}
     \!\! \!\overset{(b)}{\leq} \!\left( \frac{1}{\phi_{t-1}(\tau^{(t)}_i)}\!-\!\frac{1}{\tau^{(t)}_i}\right)^{-1}
    \!\!\! \overset{(c)}{=}\! g_{t-1}(\tau^{(t)}_i)
    \!\overset{(d)}{\leq} g_{t-1}(\tau^{(t-1)}_i) \!\overset{(e)}{=}\! v_{i+1}^{(t-1)} \label{v_mono},
\end{align}
 where step (a) follows from the recursion \eqref{g}, step (b) follows from \eqref{phi_mono}, step (c) follows from the definition in \eqref{g}, step (d) is due to the monotonicity of $g_{t-1}(\cdot)$ described in Lemma 2 and the fact that $\tau^{(t)}_1$ are identical for strictly stationary $\{ {x}^{(t)}\}_{t=1}^{T}$, and step (e) follows from \eqref{g}.
From $f(\cdot)$ in \eqref{f},
\begin{equation}
    \tau_{i+1}^{(t)} = f(v_{i+1}^{(t)}) \leq f(v_{i+1}^{(t-1)}) =\tau_{i+1}^{(t-1)} \label{tau_mono},
\end{equation}
where the inequality follows from $\eqref{v_mono}$ and the monotonicity of $f(\cdot)$.\\
\indent \emph{Part 2:} We now consider the inner induction with respect to $i$ for fixed $t=2$. Suppose that \eqref{Th2-1} and \eqref{Th2-2} hold for $i=1,...,I$. We need to prove \eqref{Th2-1} and \eqref{Th2-2} hold for $i=I+1$ and $t=2$. We readily see that all the steps of \eqref{phi_mono}-\eqref{tau_mono} hold straightforwardly for $i=I+1$, except for step (d) in \eqref{v_mono}. Yet, step (d) in \eqref{v_mono} holds for $i=I+1$ due to 
the induction assumption and the monotonicity of $g_{t-1}(\cdot)$.\\ 
\indent \emph{Part 3:} So far, we have shown that \eqref{Th2-1} and \eqref{Th2-2} hold for $t=2$ and $\forall i$. We now consider the outer induction with respect to $t$. Suppose that \eqref{Th2-1} and \eqref{Th2-2} hold for $t=2,..,t'$ and $\forall i$. For $t=t'+1$ and $i=1$, we readily see that all the steps of \eqref{phi_mono}-\eqref{tau_mono} hold straightforwardly for $i=1$, except for step (b) in \eqref{phi_mono}. Yet, step (b) in \eqref{phi_mono} holds due to the induction assumption and Lemma 2. Then, the case for $t=t'+1$ and any $i$ can be proved by induction with respect to $i$ in a similar way. Combining the three parts we complete the proof.

\section{Proof of Theorem 3}
We basically follow the procedure used in \cite{liu2020reconfigurable,friedlander2012hybrid}.
From \cite[Theorem 2.2]{friedlander2012hybrid}, we have 
\begin{align}
            \mathbb{E}[\mathcal{L}(\boldsymbol{\theta}^{(T+1)})]-\mathcal{L}(\boldsymbol{\theta}^{*}) \leq {} & \left( \mathbb{E}[\mathcal{L}(\boldsymbol{\theta}^{(1)})]-\mathcal{L}(\boldsymbol{\theta}^{*}) \right) \left (1-\frac{c}{L} \right)^T \notag \\
           & + \frac{1}{2L} \sum_{t=1}^{T}  
        \left (1-\frac{c}{L} \right)^{T-t} \mathbb{E}\left[\bigg{\Vert} \hat{\mathbf{x}}^{(t)}-\frac{1}{K}\sum_{m=1}^{M} K_m \mathbf{g}_m^{(t)} \bigg{\Vert}_2^2 \right]. \label{100-1}
\end{align}
We upper-bound the expected gradient error at each round $t$ by
\begin{align}
     &\mathbb{E}\left[\bigg{\Vert} \hat{\mathbf{x}}^{(t)}-\frac{1}{K}\sum_{m=1}^{M} K_m \mathbf{g}_m^{(t)} \bigg{\Vert}_2^2 \right] \notag \\ ={} & \mathbb{E}\left[\bigg{\Vert} \left(\hat{\mathbf{x}}^{(t)}-\frac{1}{K}\sum_{m=1}^{M} K_m {\mathbf{g}_m^{sp^{(t)}}} \right) + \left(\frac{1}{K}\sum_{m=1}^{M} K_m{\mathbf{g}_m^{sp^{(t)}}} - \frac{1}{K}\sum_{m=1}^{M}K_m \mathbf{g}_m^{(t)} \right)
    \bigg{\Vert}_2^2 \right] \notag \\
    \leq {} & 2\mathbb{E}\left[\bigg{\Vert} \left(\hat{\mathbf{x}}^{(t)}-\frac{1}{K}\sum_{m=1}^{M} K_m{\mathbf{g}_m^{sp^{(t)}}} \right) \bigg{\Vert}_2^2 \right] + 2\bigg{\Vert} \frac{1}{K}\sum_{m=1}^{M}K_m {\mathbf{g}_m^{sp^{(t)}}} - \frac{1}{K}\sum_{m=1}^{M}K_m \mathbf{g}_m^{(t)} 
    \bigg{\Vert}_2^2 \label{84},
\end{align}
where the last inequality is due to the Cauchy-Schwartz inequlity. Note that the first term is simply $\phi_t \left(\tau^{(t)}_*;\tau^{(t-1)}_*,...,\tau^{(1)}_*\right)$ in \eqref{phi} from the SE analysis. As for the second term,
\begin{align}
    \notag & \bigg{\Vert} \frac{1}{K}\sum_{m=1}^{M}K_m {\mathbf{g}_m^{sp^{(t)}}} - \frac{1}{K}\sum_{m=1}^{M}K_m \mathbf{g}_m^{(t)}
    \bigg{\Vert}_2^2 \\ \overset{(a)}{=} {} &
    \bigg{\Vert} \frac{1}{K}\sum_{m=1}^{M} K_m\left(\boldsymbol{\Delta}_m^{(t)}-\boldsymbol{\Delta}_m^{(t+1)}\right) \bigg{\Vert}_2^2  \notag \\  \overset{(b)}{\leq} {} &
    \left( \frac{1}{K}\sum_{m=1}^{M} K_m\left( \Vert \boldsymbol{\Delta}_m^{(t)}\Vert_2+ \Vert \boldsymbol{\Delta}_m^{(t+1)}\Vert_2\right) \right)^2
    \notag \\ \overset{(c)}{\leq} {} &
    \left( \frac{1}{K}\sum_{m=1}^{M} \sum_{i=1}^{t-1} K_m\rho^{t-i} \Vert \mathbf{g}_m^{(i)} \Vert_2 + \frac{1}{K}\sum_{m=1}^{M}\sum_{i=1}^{t}K_m \rho^{t+1-i} \Vert \mathbf{g}_m^{(i)} \Vert_2 \right)^2 \notag \\ \overset{(d)}{\leq} {} & 
    \left[G \rho \left( \frac{(1+\rho)(1-\rho^t)}{1-\rho}+1 \right) \right]^2 \label{106},
\end{align}
where step (a) is due to \eqref{ec}-\eqref{Delta}, step (b) is due to the triangle inequality, 
step (c) is due to \cite[(49)]{amiri2020machine}, and step (d) is due to Assumption 3). Combining \eqref{100-1}-\eqref{106}, we finally obtain the desired result in \eqref{convergence}.

\bibliographystyle{IEEEtran}
\bibliography{citationlist}

\end{document}